\newcommand\fullversion

\ifcsname fullversion\endcsname

\documentclass[runningheads,envcountsame]{article}
\usepackage[a4paper, total={6.8in, 8in}]{geometry}
\usepackage{amsthm}

\newtheorem{theorem}{Theorem}
\newtheorem{lemma}[theorem]{Lemma}
\newtheorem{example}[theorem]{Example}
\newtheorem{definition}[theorem]{Definition}
\newtheorem{proposition}[theorem]{Proposition}
\newtheorem{claim}[theorem]{Claim}

\else

\documentclass[runningheads,envcountsame]{llncs}

\fi

\usepackage[T1]{fontenc}
\usepackage{amsmath,amssymb}
\usepackage{enumerate}
\usepackage{graphicx}
\usepackage{tikz}
\usetikzlibrary{trees}
\usetikzlibrary{shapes}
\usetikzlibrary{fit}
\usetikzlibrary{shadows}
\usetikzlibrary{backgrounds}
\usetikzlibrary{arrows,automata}

\ifcsname fullversion\endcsname
\usepackage[bibliography=common]{apxproof}
\else
\usepackage[bibliography=common,appendix=strip]{apxproof}
\fi

\newtheoremrep{lemma}[theorem]{Lemma}
\newtheoremrep{corollary}[theorem]{Corollary}

\usepackage{mathtools}
\usepackage{amsmath}
\usepackage{amssymb}
\usepackage{algorithm}
\usepackage{algpseudocode}
\usepackage{enumitem}
\usepackage{xparse}
\usepackage{enumerate}
\usepackage{xspace}
\usepackage{algorithm}
\usepackage{algpseudocode}
\usepackage{float}
\usepackage{xcolor}
\usepackage{soul}
\usepackage{thm-restate}
\usepackage{pdfpages}
\usepackage{hyperref}
\usepackage{wrapfig}
\usepackage{tikz}
\usetikzlibrary{patterns}
\usetikzlibrary{fit}
\usetikzlibrary{calc}
\usetikzlibrary{decorations.pathreplacing}
\usetikzlibrary{decorations.text}
\usetikzlibrary{petri,positioning}
\usepackage{subcaption}

\definecolor{mygreen}{RGB}{36, 200, 100}
\definecolor{myyellow}{RGB}{220, 220, 30}
\newcommand{\mybox}[1]{\colorbox{myyellow!40}{$\displaystyle #1$}}



\newcommand{\mathcommand}[2]{\newcommand{#1}{\ensuremath{#2}\xspace}}


\mathcommand{\FO}{\textsc{FO}}
\mathcommand{\prefFO}{\textsc{FO}^{\textsc{pref}}}
\mathcommand{\prefFOdw}{\prefFO[\lesssim]}  
\mathcommand{\FOtwo}{\FO^2}
\mathcommand{\FOthree}{\FO^3}
\mathcommand{\MSO}{\textsc{MSO}}
\NewDocumentCommand{\FOeq}{O{}}{\ensuremath{\FO_{#1}[\sim]}\xspace}
\NewDocumentCommand{\FOtwoeq}{O{}}{\ensuremath{\FOtwo_{#1}[\sim]}\xspace}

\NewDocumentCommand{\FOeqord}{}{\ensuremath{\FO[\sim,<]}\xspace}
\NewDocumentCommand{\FOtwoeqord}{}{\ensuremath{\FO^{2}[\sim,<]}\xspace}

\NewDocumentCommand{\MSOordeq}{O{}}{\ensuremath{\MSO^{#1}[\lesssim]}\xspace}

\mathcommand{\plusone}{+1}

\NewDocumentCommand{\PARTSYNTH}{O{\FOeq}}{\ensuremath{\textsc{PartSynth}(#1)}\xspace}
\NewDocumentCommand{\SHAREDSYNTH}{O{\FOeq}}{\ensuremath{\textsc{SharedSynth}(#1)}\xspace}

\NewDocumentCommand{\foeq}{O{k}}{\ensuremath{\equiv^{\FO}_{#1}}\xspace}
\NewDocumentCommand{\fotwoeq}{O{k}}{\ensuremath{\equiv^{\FOtwo}_{#1}}\xspace}   
\NewDocumentCommand{\msoeq}{O{k}}{\ensuremath{\equiv^{\MSO}_{#1}}\xspace}
\NewDocumentCommand{\prefFOeq}{O{k}}{\ensuremath{\equiv^{\prefFO}_{#1}}\xspace}
\NewDocumentCommand{\prefFOdweq}{O{k}}{\ensuremath{\equiv^{\prefFOdw}_{#1}}\xspace}
\NewDocumentCommand{\msoordeqeq}{O{k}}{\ensuremath{\equiv^{\MSOordeq}_{#1}}\xspace}


\mathcommand{\PTIME}{\textsc{PTime}}
\mathcommand{\LOGSPACE}{\textsc{LogSpace}}
\mathcommand{\PSPACE}{\textsc{PSpace}}
\mathcommand{\ACO}{\textsc{AC}\ensuremath{^0}}

\mathcommand{\Alp}{A}
\mathcommand{\Proc}{\mathbb{P}}
\mathcommand{\dword}{\mathfrak{w}}
\mathcommand{\dwordbis}{\mathfrak{\widehat w}}
\mathcommand{\dwordnew}{\mathfrak{w}'}
\mathcommand{\dwordbisnew}{\mathfrak{\widehat w}'}


\mathcommand{\procex}{\exists^{\text{proc}}}
\mathcommand{\boundex}{\exists^{\text{bnd}}}
\mathcommand{\prefex}{\exists^{\text{pref}}}

\NewDocumentCommand{\pebble}{O{i}}{\ensuremath{p_{#1}}\xspace}
\NewDocumentCommand{\pebblebis}{O{i}}{\ensuremath{\widehat{p}_{#1}}\xspace}
\NewDocumentCommand{\procpebble}{O{i}}{\ensuremath{p^{\text{proc}}_{#1}}\xspace}
\NewDocumentCommand{\procpebblebis}{O{i}}{\ensuremath{\widehat{p}^{\text{proc}}_{#1}}\xspace}
\NewDocumentCommand{\boundpebble}{O{i}}{\ensuremath{p^{\text{bnd}}_{#1}}\xspace}
\NewDocumentCommand{\boundpebblebis}{O{i}}{\ensuremath{\widehat{p}^{\text{bnd}}_{#1}}\xspace}
\NewDocumentCommand{\prefpebble}{O{i}}{\ensuremath{p^{\text{pref}}_{#1}}\xspace}
\NewDocumentCommand{\prefpebblebis}{O{i}}{\ensuremath{\widehat{p}^{\text{pref}}_{#1}}\xspace}

\NewDocumentCommand{\prefEFeq}{O{k}}{\ensuremath{\equiv^{\text{pref-EF}}_{#1}}\xspace}

\mathcommand{\Const}{C}
\mathcommand{\procConst}{\Const^{\text{proc}}}
\mathcommand{\boundConst}{\Const^{\text{bnd}}}
\mathcommand{\prefConst}{\Const^{\text{pref}}}
\NewDocumentCommand{\boundconst}{O{i}}{\ensuremath{c^{\text{bnd}}_{#1}}\xspace}
\NewDocumentCommand{\prefconst}{O{i}}{\ensuremath{c^{\text{pref}}_{#1}}\xspace}
\NewDocumentCommand{\procconst}{O{i}}{\ensuremath{c^{\text{proc}}_{#1}}\xspace}
\mathcommand{\boundl}{l^{\text{bnd}}}
\mathcommand{\prefl}{l^{\text{pref}}}
\mathcommand{\procl}{l^{\text{proc}}}

\NewDocumentCommand{\boundvar}{O{}}{\ensuremath{x^{\text{bnd}}_{#1}}\xspace}
\NewDocumentCommand{\prefvar}{O{}}{\ensuremath{x^{\text{pref}}_{#1}}\xspace}
\NewDocumentCommand{\procvar}{O{}}{\ensuremath{x^{\text{proc}}_{#1}}\xspace}

\mathcommand{\Var}{X}
\mathcommand{\procVar}{\Var^{\text{proc}}}
\mathcommand{\boundVar}{\Var^{\text{bnd}}}
\mathcommand{\prefVar}{\Var^{\text{pref}}}

\mathcommand{\move}{\alpha}
\mathcommand{\movebis}{\widehat \move}
\mathcommand{\movetwo}{\beta}
\mathcommand{\movetwobis}{\widehat \movetwo}


\newcommand{\Sys}{System\xspace}
\newcommand{\Envi}{Environment\xspace}

\mathcommand{\sig}{A}
\mathcommand{\sigE}{\sig_E}
\mathcommand{\sigS}{\sig_S}

\newcommand{\sys}{\mathrm{S}}
\newcommand{\env}{\mathrm{E}}
\mathcommand{\AlpE}{\Alp_\env}
\mathcommand{\AlpS}{\Alp_\sys}
\mathcommand{\ProcE}{\Proc_\env}
\mathcommand{\ProcS}{\Proc_\sys}
\mathcommand{\ProcM}{\Proc_{\sys\env}}
\NewDocumentCommand{\DW}{O{\vocab}O{\Proc}}{\ensuremath{\mathrm{DW}_{#1}^{#2}}}

\mathcommand{\arena}{\mathcal{A}}
\mathcommand{\Transi}{\Delta}
\mathcommand{\Acc}{\alpha_\formule}
\mathcommand{\counting}{\kappa}
\mathcommand{\tok}{t}
\mathcommand{\Tok}{\mathbb{T}}
\mathcommand{\TokE}{\Tok_{\env}}
\mathcommand{\TokS}{\Tok_{\sys}}

\NewDocumentCommand{\Plays}{O{\Tok}O{k}}{\ensuremath{\mathrm{Plays}_{#1}^{#2}}\xspace}
\mathcommand{\play}{\pi}
\mathcommand{\playbis}{\play'}        
\mathcommand{\playplus}{\play_{+}}
\mathcommand{\playbisplus}{\playbis_{+}}
\NewDocumentCommand{\partplay}{O{i}}{\ensuremath{\play_{|#1}}\xspace}
\mathcommand{\nextplay}{\bar\play}
\mathcommand{\nextplayplus}{\nextplay_+}

\NewDocumentCommand{\TypesFO}{O{k}}{\ensuremath{\mathrm{Types}_{\FO}^{#1}}\xspace}
\NewDocumentCommand{\Types}{O{k}}{\ensuremath{\mathrm{Types}_{\prefFO}^{#1}}\xspace}
\NewDocumentCommand{\tpword}{O{w}O{\prefFO}O{k}}{\ensuremath{\langle {#1} \rangle_{#2}^{#3}}\xspace}
\NewDocumentCommand{\fotpword}{O{w}O{k}}{\tpword[#1][\FO][#2]}
\NewDocumentCommand{\preffotpword}{O{w}O{k}}{\tpword[#1][\prefFO][#2]}
\NewDocumentCommand{\smallpreffotpword}{O{w}O{k}}{\tiny\langle {#1}\rangle_{\scriptscriptstyle\textsc{FO}^{\scriptscriptstyle\textsc{pref}}}^{\scriptscriptstyle #2}\xspace}

\mathcommand{\conf}{C}
\mathcommand{\config}{\conf}
\mathcommand{\confinit}{\conf_0}
\mathcommand{\configinit}{\confinit}

\mathcommand{\movegame}{m}
\mathcommand{\moveS}{\movegame_\sys}
\mathcommand{\moveE}{\movegame_\env}
\mathcommand{\nextMoveE}{\moveE^{2l}}

\mathcommand{\newgame}{(\nS,\nE)\text{-game}}
\mathcommand{\oldgame}{(\nS+1,\nE)\text{-game}}


\NewDocumentCommand{\minind}{O{\formule}}{\ensuremath{n^\text{min}_\text{E}}\xspace}
\NewDocumentCommand{\cut}{O{\formule}}{\ensuremath{n^\text{min}_\text{S}}\xspace}
\mathcommand{\pos}{l}
\mathcommand{\posinit}{l_0}
\mathcommand{\posbis}{\pos'}
\mathcommand{\spel}{\hat \pos}
\NewDocumentCommand{\afterE}{O{\pos}}{E^*(#1)}
\NewDocumentCommand{\succE}{O{\pos}}{E^{+1}(#1)}
\mathcommand{\confbis}{\widehat\conf}
\mathcommand{\confS}{\conf_\text{S}}
\mathcommand{\confbisS}{\confbis_\text{S}}
\mathcommand{\confE}{\conf_\text{E}}
\mathcommand{\confbisE}{\confbis_\text{E}}
\NewDocumentCommand{\numS}{O{\pos}}{\ensuremath{\confS(#1)}\xspace}
\NewDocumentCommand{\numE}{O{\pos}}{\ensuremath{\confE(#1)}\xspace}
\NewDocumentCommand{\numafterE}{O{\pos}}{\confE^*(#1)}
\NewDocumentCommand{\pot}{O{\pos}}{\rho(#1)}
\mathcommand{\victory}{\mathcal F}
\mathcommand{\game}{\mathfrak G}

\mathcommand{\vocabE}{\vocab_E}
\mathcommand{\vocabS}{\vocab_S}

\newcommand{\nE}{\ensuremath{n_E}\xspace}
\newcommand{\nS}{\ensuremath{n_S}\xspace}

\NewDocumentCommand{\fE}{O{k}}{\ensuremath{f_E(#1)}\xspace}
\NewDocumentCommand{\fS}{O{k}O{\nE}}{\ensuremath{f_S(#1,#2)}\xspace}


\mathcommand{\States}{\mathcal Q}
\mathcommand{\state}{q}
\mathcommand{\statebis}{\state'} 
\mathcommand{\istate}{\state_0}
\mathcommand{\stateone}{\state_1}
\mathcommand{\statetwo}{\state_2}
\mathcommand{\hstate}{\state_h}
\mathcommand{\Trans}{\mathcal T}
\mathcommand{\trans}{t}
\mathcommand{\transzero}{\trans_0}
\mathcommand{\transone}{\trans_1}
\mathcommand{\transtwo}{\trans_2}
\mathcommand{\transthree}{\trans_3}
\mathcommand{\mm}{M}
\mathcommand{\formmm}{\formule_{\mm}}

\NewDocumentCommand{\transition}{O{\trans}O{\state}O{\state}O{\cnt++}}{\ensuremath{#1:#2\xrightarrow{#4}#3}\xspace}

\newcommand{\strinc}{++}
\newcommand{\strdec}{--}
\newcommand{\strzero}{==0}

\NewDocumentCommand{\Transinc}{O{i}}{\ensuremath{\Trans^{\strinc}_{#1}}\xspace}
\NewDocumentCommand{\Transdec}{O{i}}{\ensuremath{\Trans^{\strdec}_{#1}}\xspace}
\NewDocumentCommand{\Transzero}{O{i}}{\ensuremath{\Trans^{\strzero}_{#1}}\xspace}

\NewDocumentCommand{\good}{O{}O{}O{x}}{\ensuremath{\text{Valid}_{#1}^{#2}(#3)}\xspace}
\NewDocumentCommand{\goode}{O{}O{x}}{\good[E][#1][#2]}
\NewDocumentCommand{\goods}{O{}O{x}}{\good[S][#1][#2]}
\NewDocumentCommand{\goodeinc}{O{x}}{\goode[\strinc][#1]}
\NewDocumentCommand{\goodsinc}{O{x}}{\goods[\strinc][#1]}
\NewDocumentCommand{\goodedec}{O{x}}{\goode[\strdec][#1]}
\NewDocumentCommand{\goodsdec}{O{x}}{\goods[\strdec][#1]}
\NewDocumentCommand{\goodezero}{O{x}}{\goode[\strzero][#1]}
\NewDocumentCommand{\goodszero}{O{x}}{\goods[\strzero][#1]}

\mathcommand{\ko}{\text{ko}}
\mathcommand{\ok}{\text{ok}}
\mathcommand{\noop}{\text{noop}}
\mathcommand{\koe}{\ko_E}
\mathcommand{\oke}{\ok_E}
\mathcommand{\kos}{\ko_S}
\mathcommand{\oks}{\ok_S}

\newlist{enuE}{enumerate}{1}
\setlist[enuE]{label=\textbf{[E\arabic*]}}
\newlist{enuS}{enumerate}{1}
\setlist[enuS]{label=\textbf{[S\arabic*]}}

\NewDocumentCommand{\ce}{O{i}}{\ensuremath{e_{#1}}\xspace}
\NewDocumentCommand{\cs}{O{i}}{\ensuremath{s_{#1}}\xspace}
\NewDocumentCommand{\cnt}{O{i}}{\ensuremath{c_{#1}}\xspace}
\mathcommand{\cntzero}{\cnt[0]}
\mathcommand{\cntone}{\cnt[1]}

\mathcommand{\eproc}{\bullet}

\NewDocumentCommand{\inc}{O{i}}{\ensuremath{\text{inc}_{#1}}\xspace}
\NewDocumentCommand{\dec}{O{i}}{\ensuremath{\text{dec}_{#1}}\xspace}

\mathcommand{\formkoe}{\Phi_{\koe}}
\mathcommand{\formkos}{\Phi_{\kos}}
\NewDocumentCommand{\formbadseq}{O{x}}{\ensuremath{\Psi_{\text{bad sequence}}(#1)}\xspace}
\NewDocumentCommand{\formbadtarget}{O{x}}{\ensuremath{\Psi_{\text{bad target}}(#1)}\xspace}
\NewDocumentCommand{\formbadsource}{O{x}}{\ensuremath{\Psi_{\text{bad source}}(#1)}\xspace}
\NewDocumentCommand{\formbadupkeep}{O{x}}{\ensuremath{\Psi_{\text{bad upkeep}}(#1)}\xspace}
\NewDocumentCommand{\formbadzerotest}{O{x}}{\ensuremath{\Psi_{\text{bad zero test}}(#1)}\xspace}

\NewDocumentCommand{\isstate}{O{x}}{\ensuremath{\States(#1)}\xspace}
\NewDocumentCommand{\istrans}{O{x}}{\ensuremath{\Trans(#1)}\xspace}
\NewDocumentCommand{\istransinc}{O{x}}{\ensuremath{\Transinc(#1)}\xspace}
\NewDocumentCommand{\istransdec}{O{x}}{\ensuremath{\Transdec(#1)}\xspace}
\NewDocumentCommand{\istranszero}{O{x}}{\ensuremath{\Transzero(#1)}\xspace}
\NewDocumentCommand{\isupkeep}{O{x}}{\ensuremath{\mathcal U(#1)}\xspace}
\NewDocumentCommand{\issys}{O{x}}{\ensuremath{\sigS(#1)}\xspace}
\NewDocumentCommand{\isenv}{O{x}}{\ensuremath{\sigE(#1)}\xspace}

\mathcommand{\formprefixE}{\Phi_{\text{bad prefix E}}}
\mathcommand{\formprefixS}{\Phi_{\text{bad prefix S}}}
\mathcommand{\formblockE}{\Phi_{\text{E blocks}}}
\mathcommand{\formblockS}{\Phi_{\text{S blocks}}}
\mathcommand{\formplayafterkoE}{\Phi_{\text{E plays after ko}}}
\mathcommand{\formplayafterkoS}{\Phi_{\text{S plays after ko}}}


\mathcommand{\first}{\texttt{first}}
\mathcommand{\second}{\texttt{second}}
\mathcommand{\last}{\texttt{last}}

\mathcommand{\N}{\mathbb{N}}
\mathcommand{\Zero}{\mathbb O}
\mathcommand{\im}{\operatorname{Im}}
\newcommand{\EF}{Ehrenfeucht-Fra\"iss\'e\xspace}


\mathcommand{\logic}{\mathcal L}
\mathcommand{\logicbis}{\logic'}

\mathcommand{\thr}{t}

\mathcommand{\elem}{a}
\mathcommand{\elembis}{b}
\mathcommand{\elemter}{c}

\mathcommand{\var}{x}
\mathcommand{\varbis}{y}

\mathcommand{\classe}{\mathcal C}
\mathcommand{\prop}{\mathcal P}

\mathcommand{\structdom}{A}
\mathcommand{\struct}{\mathcal A}
\mathcommand{\structbisdom}{A'}
\mathcommand{\structbis}{\mathcal A'}

\mathcommand{\type}{\tau}
\mathcommand{\typebis}{\type'}
\mathcommand{\typeter}{\type''}
\NewDocumentCommand{\typeghost}{O{\play}}{\ensuremath{\type^{\ghost[]}_{#1}}\xspace}
\mathcommand{\nexttypeghost}{\typeghost[\nextplay]}

\mathcommand{\formule}{\varphi}
\mathcommand{\formulebis}{\psi}
\mathcommand{\formuleter}{\theta}
\mathcommand{\vocab}{\Sigma}

\mathcommand{\graphe}{\mathcal\graphedom}
\mathcommand{\graphebis}{\mathcal\graphebisdom}
\mathcommand{\graphedom}{G}
\mathcommand{\graphebisdom}{\graphedom'}
\mathcommand{\graphedombis}{H}
\mathcommand{\edgerel}{E}

\mathcommand{\motvide}{\varepsilon}




\NewDocumentCommand{\initsegment}{O{n}}{\ensuremath{[1,#1]}\xspace}
\mathcommand{\oldsysproc}{\initsegment[\nS+1]}
\mathcommand{\newsysproc}{\initsegment[\nS]}
\mathcommand{\envproc}{\initsegment[\nE]}

\mathcommand{\oldProcS}{\ProcS^{\nS+1}}
\mathcommand{\newProcS}{\ProcS^{\nS}}

\mathcommand{\oldTokS}{\TokS^{\nS+1}}
\mathcommand{\newTokS}{\TokS^{\nS}}

\NewDocumentCommand{\ps}{O{}O{}}{\ensuremath{s_{#1}^{#2}}\xspace}
\NewDocumentCommand{\oldps}{O{}}{\ps[#1][\nS+1]}
\NewDocumentCommand{\newps}{O{}}{\ps[#1][\nS]}
\NewDocumentCommand{\pe}{O{}}{\ensuremath{e_{#1}}\xspace}


\NewDocumentCommand{\strat}{O{}}{\ensuremath{\mathcal S_{#1}}\xspace}
\mathcommand{\stratnEplus}{\strat[+]}
\mathcommand{\stratnE}{\strat[]}
\mathcommand{\oldstrat}{\strat[+]}
\mathcommand{\newstrat}{\strat[]}

\mathcommand{\gameplus}{\game_{+}}

\mathcommand{\largetype}{\mathit{lt}}

\NewDocumentCommand{\map}{O{\play}}{\ensuremath{\sigma_{#1}}\xspace}
\mathcommand{\nextmap}{\map[\nextplay]}

\NewDocumentCommand{\oldplay}{O{\play}}{\ensuremath{\nu({#1})}\xspace}
\mathcommand{\nextoldplay}{\oldplay[\nextplay]}

\NewDocumentCommand{\ghost}{O{\play}}{\ensuremath{\gamma_{#1}}\xspace}
\mathcommand{\nextghost}{\ghost[\nextplay]}      

\NewDocumentCommand{\actarray}{O{}}{\ensuremath{\texttt{act}_{#1}}\xspace}
\mathcommand{\prevactarray}{\actarray[r]}
\mathcommand{\nextactarray}{\actarray[r+1]}
\NewDocumentCommand{\act}{O{}O{\ps}}{\ensuremath{\actarray[#1][#2]}\xspace}
\NewDocumentCommand{\prevact}{O{\ps}}{\ensuremath{\act[r][#1]}\xspace}
\NewDocumentCommand{\nextact}{O{\ps}}{\ensuremath{\act[r+1][#1]}\xspace}

\NewDocumentCommand{\invstp}{O{\play}}{\ensuremath{(\texttt{I}_\text{S}[#1])}\xspace}
\NewDocumentCommand{\invetp}{O{\play}}{\ensuremath{(\texttt{I}_\text{E}[#1])}\xspace}
\NewDocumentCommand{\invnu}{O{\play}}{\ensuremath{(\texttt{I}_{+} [#1])}\xspace}
\NewDocumentCommand{\invbound}{O{\play}}{\ensuremath{(\texttt{I}_\text{F} [#1])}\xspace}
\NewDocumentCommand{\invghosttp}{O{\play}}{\ensuremath{(\texttt{I}_{\typeghost[]} [#1])}\xspace}
\NewDocumentCommand{\invmax}{O{\play}}{\ensuremath{(\texttt{I}_{\text{max}} [#1])}\xspace}
\NewDocumentCommand{\invbig}{O{\play}}{\ensuremath{(\texttt{I}_{\text{big}} [#1])}\xspace}

\mathcommand{\nextinvstp}{\invstp[\nextplay]}
\mathcommand{\nextinvetp}{\invetp[\nextplay]}
\mathcommand{\nextinvnu}{\invnu[\nextplay]}
\mathcommand{\nextinvbound}{\invbound[\nextplay]}
\mathcommand{\nextinvghosttp}{\invghosttp[\nextplay]}
\mathcommand{\nextinvmax}{\invmax[\nextplay]}
\mathcommand{\nextinvbig}{\invbig[\nextplay]}

\NewDocumentCommand{\dw}{O{}O{}}{\ensuremath{\dword_{#1}^{#2}}\xspace}
\mathcommand{\prevolddw}{\dw[r][\nS+1]}
\mathcommand{\nextolddw}{\dw[r+1][\nS+1]}
\mathcommand{\prevnewdw}{\dw[r][\nS]}
\mathcommand{\nextnewdw}{\dw[r+1][\nS]}

\mathcommand{\seqmoves}{\widehat{m}}
\mathcommand{\seqmovesbis}{\seqmoves'}

\NewDocumentCommand{\updatetp}{O{\dword}O{\ps}O{\type}}{\ensuremath{#1\big[#2\Rightarrow #3\big]\xspace}}

\NewDocumentCommand{\msotp}{O{p}O{\dw}O{k}}{\ensuremath{\text{mso-tp}^{#3}(#1\,|\,#2)}\xspace}
\NewDocumentCommand{\prevoldmsotp}{O{\prevmap(s)}O{k}}{\msotp[#1][\prevolddw][#2]}
\NewDocumentCommand{\nextoldmsotp}{O{\nextmap(s)}O{k}}{\msotp[#1][\nextolddw][#2]}
\NewDocumentCommand{\prevnewmsotp}{O{s}O{k}}{\msotp[#1][\prevnewdw][#2]}
\NewDocumentCommand{\nextnewmsotp}{O{s}O{k}}{\msotp[#1][\prevnewdw][#2]}

\NewDocumentCommand{\preffotp}{O{p}O{\dw}O{k}}{\ensuremath{\langle #1\,|\,#2\rangle^{#3}_{\prefFO}}\xspace}
\NewDocumentCommand{\prevoldpreffotp}{O{\prevmap(s)}O{k}}{\preffotp[#1][\prevolddw][#2]}
\NewDocumentCommand{\nextoldpreffotp}{O{\nextmap(s)}O{k}}{\preffotp[#1][\nextolddw][#2]}
\NewDocumentCommand{\prevnewpreffotp}{O{s}O{k}}{\preffotp[#1][\prevnewdw][#2]}
\NewDocumentCommand{\nextnewpreffotp}{O{s}O{k}}{\preffotp[#1][\nextnewdw][#2]}

\NewDocumentCommand{\height}{O{\type}}{\ensuremath{h(#1)}\xspace}
\mathcommand{\heightmax}{h_{\mathrm{max}}}

\NewDocumentCommand{\boundcc}{O{\height}}{\ensuremath{F_\text{CC}(#1)}\xspace}
\NewDocumentCommand{\boundtp}{O{\type}}{\ensuremath{F(#1)}\xspace}  
\NewDocumentCommand{\bound}{O{\height}}{\ensuremath{F(#1)}\xspace}  

\NewDocumentCommand{\CC}{O{\type}}{\ensuremath{\text{CC}(#1)}\xspace}
\mathcommand{\CCnp}{\mathcal{C}}
\mathcommand{\prevghostcc}{\CC[\oldmsotp[\prevghost]]}
\mathcommand{\nextghostcc}{\CC[\oldmsotp[\nextghost]]}

\NewDocumentCommand{\tpord}{O{k}}{\ensuremath{\prec_{#1}}\xspace}

\NewDocumentCommand{\auto}{O{k}}{\ensuremath{\mathfrak A_{#1}}\xspace}
\NewDocumentCommand{\preordtp}{O{k}}{\ensuremath{\rightharpoonup_{#1}}\xspace}
\NewDocumentCommand{\equivtp}{O{k}}{\ensuremath{\rightleftharpoons_{#1}}\xspace}

\mathcommand{\false}{\texttt{false}}
\mathcommand{\true}{\texttt{true}}

\mathcommand{\seqE}{\dw[E]}
\mathcommand{\seqS}{\dw[S]}
\mathcommand{\longseq}{\widehat{\dw}}

\mathcommand{\resetact}{\texttt{reset\_act()}}
\mathcommand{\restr}{\upharpoonright}

\mathcommand{\perm}{\xi}
\mathcommand{\permbis}{\perm'}

\usepackage{scalerel}
\mathcommand{\letterE}{a_E}
\mathcommand{\letterS}{a_S}
\mathcommand{\exformule}{\formule_{{\scriptscriptstyle 1 a_{\scaleto{E}{2pt}}\leftrightarrow1a_{\scaleto{S}{2pt}}}}}
\mathcommand{\open}{\texttt{open}}
\mathcommand{\close}{\texttt{close}}

\begin{document}

\title{Synthesis for prefix first-order logic on data words
\thanks{Supported by the ERC Consolidator grant D-SynMA (No. 772459).}}

\ifcsname fullversion\endcsname

\author{Julien Grange\footnote{Univ Paris Est Creteil, LACL, F-94010 Creteil, France}\and
Mathieu Lehaut\footnote{University of Gothenburg, Sweden}}
\date{}

\else

\author{Julien Grange\inst{1}\orcidID{0009-0005-0470-1781} \and
Mathieu Lehaut\inst{2}\orcidID{0000-0002-6205-0682}}
\authorrunning{J. Grange and M. Lehaut}
\institute{Univ Paris Est Creteil, LACL, F-94010 Creteil, France 
\email{julien.grange@lacl.fr}
\and
University of Gothenburg, Sweden
\email{lehaut@chalmers.se}}

\fi

\maketitle

\begin{abstract}
We study the reactive synthesis problem for distributed systems with an unbounded number of participants interacting with an uncontrollable environment.
Executions of those systems are modeled by data words, and specifications are given as first-order logic formulas from a fragment we call prefix first-order logic that implements a limited kind of order.
We show that this logic has nice properties that enable us to prove decidability of the synthesis problem.
\end{abstract}

\section{Introduction}\label{sec:intro}
Distributed algorithms have been increasingly more common in recent years, and can be found in a wide range of domains such as distributed computing, swarm robotics, multi-agent systems, and communication protocols, among others.
Those algorithms are often more complex than single-process algorithms due to the interplay between the different processes involved in the computations.
Another complication arises in the fact that some algorithms must be designed for distributed systems where the number of participants is not known in advance, which is often the case in applications where agents can come and leave at a moment's notice as is the case, for instance, in ad-hoc networks.
Those properties make it hard for programmers to design such algorithms without any mistake, thus justifying the development of formal methods for their verification.

In this paper, we focus on the \emph{reactive synthesis} problem.
This problem involves systems that interact with an uncontrollable environment, with the system outputting some values depending on the inputs that are given by the environment.
Given a specification stating what are the allowed behaviors of the whole system, the goal is to automatically build a program that would satisfy the specification.
This problem dates all the way back to Church \cite{sisl1957-Chu}, whose original statement focused on sequential systems, and was solved in this context by Büchi and Landweber \cite{TAMS138-BL}.
They reformulated this problem as a two-player synthesis game between the System and an adversarial Environment alternatively choosing actions from a finite alphabet.
The goal of System is for the resulting sequence of actions to satisfy the specification, while Environment wants to falsify it.
A winning strategy for System, if it exists, can then be seen as a program ensuring that the specification is always met.

At the cost of having one copy of the alphabet for each participant, one can adapt this setting to distributed systems where the number of participants is fixed. However, a finite alphabet is too restrictive to handle systems with an unbounded number of participants such as those we described earlier.
Indeed, with a finite amount of letters in the alphabet, one cannot distinguish every possible participant when there can be any number of those, potentially more than the number of letters.
At most, one can deal by grouping participants together in a finite number of classes and consider all that belong to the same class to be equivalent.
This however restrict the behaviors of the system and what one could specify over the system.
It is therefore worth extending this problem to infinite action alphabets.
To that end, we turn to \emph{data words}, as introduced by Bojanczyk et al. \cite{DBLP:conf/lics/BojanczykMSSD06}.
A data word is a sequence of pairs consisting of an action from a finite alphabet and a datum from an infinite alphabet. In our context the datum represents the identity of the process doing the action, meaning that a data word is seen as an execution of the system describing sequentially which actions have been taken by each process.
In the corresponding synthesis game, the two players alternatively choose both an action and a process, and the resulting play is a data word.

The last ingredient needed to properly define the synthesis problem is the choice of a formalism in which specifications are written.
Unfortunately, there is no strong candidate for a ``standard'' way of representing sets of data words, in contrast to finite automata in the case of simple words.
Many formalisms have been proposed so far, but all of them lack either good closure properties (union, complementation, etc.), have bad complexity for some basic decision procedures (membership, ...), are lacking in expressivity power or do not have a good equivalent automata $\Leftrightarrow$ logic characterization.
Let us cite nonetheless register automata who were considered first by Kaminsky and Francez \cite{kaminski1994finite} but have seen different extensions over time, pebble automata by Neven et al. \cite{neven2004finite} and data automata \cite{DBLP:conf/lics/BojanczykMSSD06}.
On the logical side, several formalisms have been proposed, such as a variant of first-order logic \cite{DBLP:conf/lics/BojanczykMSSD06}, Freeze LTL~\cite{demri2009ltl} and the logic of repeating values \cite{demri2012temporal}.
We refer the reader to Ahmet Kara's dissertation for a more comprehensive survey \cite{kara2016logics}.
Most of the previous works study the membership problem (in the case of automata) and the satisfiability problem (for logics), which are useful for model-checking applications but not enough in the synthesis context, as they lack the adversarial environment factor that is central to the problem.
A few attempts have been made in this direction, notably for the logic of repeating values by Figueira and Praveen \cite{FigueiraP18} and for register automata by Exibard et al. \cite{exibard2022generic}.

We follow previous work \cite{DBLP:conf/fossacs/BerardBLS20,grange2023first} and focus on synthesis for first-order logic extended to data words.
In this extension, we add a predicate $\sim$ to the logic such that $x \sim y$ is true when two positions $x$ and $y$ of a data word share the same data value, i.e. when both actions have been made by the same process.
Moreover, we partition the set of processes into System processes and Environment processes, and restrict each player to their own processes.
The reason for this is two-fold: first, when processes are shared, the synthesis problem has been shown to be undecidable even for the simplest logic possible $\FO^2[\sim]$, where only two variable names are allowed.
Second, inputs and outputs are usually physically located in different components of the system, such as sensors being disjoint from motors in a drone; it thus makes sense to see them as different processes.

As always, there is a trade-off between expressivity of the logic and decidability of its synthesis problem.
The well-known LTL undecidability result from Pnueli and Rosner \cite{pneuli1990distributed} occurs because the logic allows specifying properties that the system is too weak to satisfy.
We must therefore find a balance between making the logic expressive enough to be useful, while limiting its power so that it cannot specify properties the system is not expected to be able to satisfy in the first place.
Our previous results (\cite[Theorems 3 and 4]{grange2023first}) have shown that adding any kind of order on the positions of the data word, such as either the immediate successor predicate or the happens-before predicate, makes the synthesis problem undecidable.
While those two predicates are fine from a centralized point of view that sees the whole system as a sequential machine, they are not that well suited for real-life systems.
Indeed, it is ambitious to expect each process to know everything that happened on other processes in the exact order those actions happened; this would require every process to be informed instantly after every action happening in the system.
What is more reasonable is simply to expect a process to know the order of occurrence of its own actions only, without knowing whether those actions happened before or after actions made by other processes.

This leads us to introduce a new operator $\lesssim$, for which $x \lesssim y$ if $x \sim y$ and $x$ occurs before $y$ in the data word.
We call $\FO[\lesssim]$ the extension of \FO that includes only this new predicate.
It is strictly more expressive than $\FO[\sim]$, as the $\sim$ predicate can easily be simulated by $\lesssim$.
Similar to $\FO[\sim]$, we can study separately the class of each process.
Whereas $\FO[\sim]$ could only count how many times each action happened (up to some threshold), we can now express anything that $\FO[<]$ can express on (simple) words.
Unfortunately, the decidability of the synthesis problem for $\FO[\lesssim]$ remains open.
In this paper, we show a positive result for a restriction of this logic that we call \emph{prefix first-order logic}, denoted by \prefFOdw.
In this restriction, the first variable quantified for each process is called a bounding variable, and every subsequently defined variable belonging to the same process must occur before the bounding variable.
In other words, the first variable for each class pins down a finite prefix of the class and throws away the rest of the class; the rest of the variables can only talk about positions that fall inside this prefix.
This restriction allows us to obtain good properties that we leverage to show decidability of the synthesis problem.

This paper is organized as follows. 
We first define prefix first-order logic \prefFOdw in Section~\ref{sec:order}, and the synthesis problem and its equivalent games in Section~\ref{sec:synthesis}.
We then show the synthesis problem for \prefFOdw to be decidable of in Section~\ref{sec:decidability}, before concluding in Section~\ref{sec:conclu}.
\ifcsname fullversion\endcsname
Omitted proofs can be found in the appendix.
\fi

\section{Prefix first-order logic}\label{sec:order}
\subsection{Data words and preliminaries}
Fix two disjoint alphabets $\vocabS$ and $\vocabE$, which are respectively the \Sys and \Envi \emph{actions}.
Let $\vocab = \vocabS \uplus \vocabE$ denote their union. In the following, we write \motvide for the empty word, and $u\cdot v$ for the concatenation of the words $u$ and $v$.
Let $\ProcS$ and $\ProcE$ be two disjoint sets of \Sys and \Envi \emph{processes}, respectively.

A \emph{data word} is a (finite or infinite) sequence $\dw = (a_0,p_0) (a_1,p_1) \dots$ of pairs $(a_i,p_i) \in (\vocabS \times \ProcS) \cup (\vocabE \times \ProcE)$.
A pair $(a,p)$ indicates that action $a$ has been taken by process $p$.
The \emph{class} of a process $p$ is the word $w_p = a_{k_0} a_{k_1} \dots \in \vocab^\star$ where $(k_i)_i$ is exactly the sequence of positions in $\dw$ where actions are made by $p$.
We see data words as logical structures (and will conveniently identify a data word with its associated structure) over the vocabulary consisting of two binary predicates $\sim$, $<$, and two unary predicates $\ProcS$, $\ProcE$ as well as one additional unary predicate for each letter of \vocab. The universe of a data word has one element for each process (called \emph{process elements} -- these are needed in order to quantify over processes that have not played any action, and will drastically increase the expressive power of \prefFOdw), and one element for each position in the data word. Predicate \ProcS (resp. \ProcE) is interpreted as the set of all \Sys (resp. \Envi) process elements. For  $a\in\vocab$, the predicate $a$ holds on every position which correspond to action $a$. Predicate $<$ is interpreted as the linear order on the set of positions corresponding to their order in the data word (and is thus not defined on process elements), and $\sim$ is interpreted as an equivalence relation which has one equivalence class for every process, encompassing both its process element and all positions of its class.
It will be convenient to use $\Proc(x)$ as an alias for ``$\ProcS(x)\lor\ProcE(x)$'' and $x\lesssim y$ as an alias for ``$x\sim y\land x<y$''.

Let $\DW[\vocab][\Proc]$ denote the set of all data words over actions $\vocab$ and processes $\Proc$.
We write $\dw[][][i \dots j]$ for the factor of $\dw$ occurring between positions $i$ and $j$ (both included), and $\dw[][][i \dots]$ for the suffix starting at position $i$; we extend both notations to regular words as well.

\subsection{Prefix first-order logic on data words}
\label{sec:def_preffo}

We define \emph{prefix first-order logic on data words} \prefFOdw by induction on its formulas. We write $\formule(\procVar;\boundVar;\prefVar)$ to mean that the free variables of \formule belong to the pairwise disjoint union \Var of the three sets \procVar (the \emph{process variables}), \boundVar (the \emph{bounding variables}) and \prefVar (the \emph{prefix variables}). 
\ifcsname fullversion\endcsname
\begin{equation*}
  \begin{aligned}
    &\formule(\procVar;\ \boundVar;\prefVar)::=\\
    & \quad x=y &(x,y\in\Var)\\
    &|\ \ProcS(x)\quad |\ \ProcE(x)&(x\in\procVar)\\
    &|\ a(x) &(x\in\boundVar\cup\prefVar, a\in\vocab)\\
    &|\ x\lesssim y&(x,y\in\prefVar)\\
    &|\ x\sim y&((x,y)\in\procVar\times\Var)\\
    &|\ \formule(\procVar;\boundVar;\prefVar) \land \formule(\procVar;\boundVar;\prefVar)\\
    &|\ \neg \formule(\procVar;\boundVar;\prefVar)\\
    &|\ \exists x,\Proc(x)\ \land\ \formule(\procVar\cup\{x\};\boundVar;\prefVar)&(x\notin\Var)\\
    &|\ \exists x, \neg\Proc(x)\ \land\ \Big(\bigwedge\limits_{y\in\boundVar}x\not\sim y\Big)\land \formule(\procVar;\boundVar\cup\{x\};\prefVar)&(x\notin\Var)\\
    &|\ \exists x,\ x\lesssim y\land \formule(\procVar;\boundVar;\prefVar\cup\{x\})&(x\notin\Var, y\in\boundVar)
  \end{aligned}\\
\end{equation*}
\else
\begin{equation*}
  \begin{aligned}
    &\formule(\procVar;\ \boundVar;\prefVar)::=\\
    & \quad x=y \hspace{.764\textwidth}(x,y\in\Var)\\
    &|\ \ProcS(x)\quad |\ \ProcE(x)\quad \hspace{.609\textwidth}(x\in\procVar)\\
    &|\ a(x) \hspace{.594\textwidth}(x\in\boundVar\cup\prefVar, a\in\vocab)\\
    &|\ x\lesssim y\hspace{.735\textwidth}(x,y\in\prefVar)\\
    &|\ x\sim y\hspace{.646\textwidth}((x,y)\in\procVar\times\Var)\\
    &|\ \formule(\procVar;\boundVar;\prefVar) \land \formule(\procVar;\boundVar;\prefVar)\\
    &|\ \neg \formule(\procVar;\boundVar;\prefVar)\\
    &|\ \exists x,\Proc(x)\ \land\ \formule(\procVar\cup\{x\};\boundVar;\prefVar) \hspace{.371\textwidth}(x\notin\Var)\\
    &|\ \exists x, \neg\Proc(x)\ \land\ \Big(\bigwedge\limits_{y\in\boundVar}x\not\sim y\Big)\land \formule(\procVar;\boundVar\cup\{x\};\prefVar)\hspace{.124\textwidth} (x\notin\Var)\\
    &|\ \exists x,\ x\lesssim y\land \formule(\procVar;\boundVar;\prefVar\cup\{x\}) \hspace{.24\textwidth}(x\notin\Var, y\in\boundVar)
  \end{aligned}\\
\end{equation*}
\fi
with $\Var = \procVar \uplus \boundVar \uplus \prefVar$. The intuition is as follows.

We allow one to quantify over the process elements with process variables. Bounding and prefix variables are used to quantify over the elements of the process classes; when quantifying (existentially or universally) over an element of a process class (that is, an actual position of the data word), one must first use a bounding variable. From then on, only prefix variables can be used on this process class, which can only quantify earlier positions in the class (i.e. positions which are $\lesssim$ to the bounding position). Note that one can still quantify over other classes, using new bounding variables.

The semantics is defined as usual. As always, the \emph{quantifier depth} of a formula is the maximal number of nested quantifiers (without regard to whether they quantify process, bounding or prefix variables).

By construction, \prefFOdw is a fragment of \FOeqord (first-order logic with $\sim$ and $<$). Example~\ref{ex:main_ex} below illustrates that \prefFOdw encompasses \FOeq.

\begin{example}
  \label{ex:main_ex}
  Let us fix the alphabets $\vocabE:=\{\letterE\}$ and $\vocabS:=\{\letterS\}$. There exists an \prefFOdw formula \exformule of quantifier depth $3$ stating that there exists a process with exactly one \letterE if and only if there exists a process with exactly one \letterS. Indeed, the existence of a process with exactly one \letterE (and similarly for \letterS) can be stated as
  \begin{equation*}
    \begin{aligned}
      \exists x,\ \Proc(x)\quad &\land\ \ \quad \big(\exists y,\ \neg\Proc(y)\ \land\ y\sim x\ \land\ \letterE(y)\big) \\
      &\land\quad\neg \big(\exists y,\ \neg\Proc(y)\ \land\ y\sim x\ \land\ \letterE(y)\ \land\ \exists z,\ z\lesssim y\land\letterE(z) \big)\,.
    \end{aligned}
  \end{equation*}
    Here, $x$ is a process variable, both occurrences of $y$ are bounding variables and $z$ is a prefix variable.
\end{example}

If a (finite or infinite) word is seen as a data word with exactly one data class, then \prefFOdw can in particular be seen a logic on words: it is equivalent to the restriction of first-order logic on words where
\begin{itemize}
\item the formula must start with a universal or existential quantification on the variable $\overline x$,
\item after that, each new quantification must be of the form $\exists x<\overline x$ or $\forall x<\overline x$.
\end{itemize}
We will refer to this logic on words as \prefFO.

\begin{example}
  In order to get a better understanding of the expressive power of \prefFO and its limitation, let us consider a finite alphabet containing, among others, the two symbols \open and \close.
\\
  The property stating that every occurrence of \close must be preceded by an occurrence of \open can be formulated as follows in \prefFO:
  \[\forall \bar x,\ \close(\bar x)\quad \to\quad \exists y,\ y<\bar x\ \land\ \open(y)\,.\]
  In contrast, one cannot state in \prefFO that each occurrence of \open is followed by an occurrence of \close.
  Indeed, for a fixed quantifier depth, the two words $w = \open \cdot \close \cdot \open \cdot ... \cdot \close$ and $w' = w \cdot \open$ cannot be distinguished if $w$ is long enough.
\end{example}

For any word $w$, we let $\tpword[w]$ denote its \prefFO-type of depth $k$, i.e. the set of all sentences of \prefFO with quantifier depth at most $k$ satisfied in $w$.
Having fixed an alphabet, we denote by \Types the set of \prefFO-types of depth $k$ on words.

\begin{toappendix}

  When we defined \prefFOdw in Section~\ref{sec:def_preffo}, we did not considered constant symbols, as the game from Section~\ref{sec:standard_game} does not deal with constants -- note that considering constants in these games would be cumbersome without any increase in expressivity, as one can already implement them by adding new letters to the alphabets \vocabS and \vocabE and ensure their uniqueness in the formula \formule.

  It will however be convenient to allow the use of constant symbols in \prefFOdw, in order to make the proof of Proposition~\ref{prop:EF} smoother. Let us thus extend the definition of \prefFOdw in the following way.

  We add to the vocabulary of data words a finite set of constant symbols \Const, which we partition into \procConst (the \emph{process constants}), \boundConst (the \emph{bounding constants}) and \prefConst (the \emph{prefix constants}). We restrict our attention to data words in which the interpretation of these constant symbols is such that
  \begin{itemize}
  \item each process constant is interpreted as a process,
  \item each bounding and prefix are interpreted as positions of the data word,
  \item no two bounding constants are in the same data class, and
  \item each prefix constant is in the same data class as some bounding constant, which appear at a later position (with respect to $\leq$).
  \end{itemize}
  
  We define as before the formulas of \prefFOdw by induction, taking now into account the constant symbols. Again, $\formule(\procVar;\boundVar;\prefVar)$ means that the free variables of \formule belong to the pairwise disjoint union \Var of the three sets \procVar (the \emph{process variables}), \boundVar (the \emph{bounding variables}) and \prefVar (the \emph{prefix variables}). The additions to the original definition are highlighted.
  \ifcsname fullversion\endcsname
  \begin{equation*}
    \begin{aligned}
      &\quad x=y&(x,y\in\Var\mybox{\cup\Const})\\
      &|\ \ProcS(x)\quad |\ \ProcE(x)&(x\in\procVar\mybox{\cup\procConst})\\
      &|\  a(x)&(x\in\boundVar\cup\prefVar\mybox{\cup\boundConst\cup\prefConst}, a\in\vocab)\\
      &|\  x\lesssim y&(x,y\in\prefVar\mybox{\cup\prefConst})\\
      &|\  x\sim y& ((x,y)\in(\procVar\cup\procConst)\times(\Var\mybox{\cup\Const}))\\
      &|\  \formule(\procVar;\boundVar;\prefVar) \land \formule(\procVar;\boundVar;\prefVar)\\
      &|\  \neg \formule(\procVar;\boundVar;\prefVar)\\
      &|\  \exists x,\Proc(x)\ \land\ \formule(\procVar\cup\{x\};\boundVar;\prefVar)& (x\notin\Var)\\
      &|\  \exists x, \neg\Proc(x)\ \land\ \Big(\bigwedge\limits_{y\in\boundVar\mybox{\cup\boundConst}}x\not\sim y\Big)\land \formule(\procVar;\boundVar\cup\{x\};\prefVar)&(x\notin\Var)\\
      &|\  \exists x,\ x\lesssim y\land \formule(\procVar;\boundVar;\prefVar\cup\{x\})&(x\notin\Var, y\in\boundVar\mybox{\cup\boundConst})
    \end{aligned}
  \end{equation*}
  \else
  \begin{equation*}
    \begin{aligned}
      &\quad x=y \hspace{.71\textwidth}(x,y\in\Var\mybox{\cup\Const})\\
      &|\ \ProcS(x)\quad |\ \ProcE(x)\hspace{.538\textwidth}(x\in\procVar\mybox{\cup\procConst})\\
      &|\  a(x) \hspace{.405\textwidth}(x\in\boundVar\cup\prefVar\mybox{\cup\boundConst\cup\prefConst}, a\in\vocab)\\
      &|\  x\lesssim y \hspace{.64\textwidth} (x,y\in\prefVar\mybox{\cup\prefConst})\\
      &|\  x\sim y\hspace{.4465\textwidth} ((x,y)\in(\procVar\cup\procConst)\times(\Var\mybox{\cup\Const}))\\
      &|\  \formule(\procVar;\boundVar;\prefVar) \land \formule(\procVar;\boundVar;\prefVar)\\
      &|\  \neg \formule(\procVar;\boundVar;\prefVar)\\
      &|\  \exists x,\Proc(x)\ \land\ \formule(\procVar\cup\{x\};\boundVar;\prefVar) \hspace{.3765\textwidth} (x\notin\Var)\\
      &|\  \exists x, \neg\Proc(x)\ \land\ \Big(\bigwedge\limits_{y\in\boundVar\mybox{\cup\boundConst}}x\not\sim y\Big)\land \formule(\procVar;\boundVar\cup\{x\};\prefVar) \hspace{.0304\textwidth} (x\notin\Var)\\
      &|\  \exists x,\ x\lesssim y\land \formule(\procVar;\boundVar;\prefVar\cup\{x\})  \hspace{.147\textwidth} (x\notin\Var, y\in\boundVar\mybox{\cup\boundConst})
    \end{aligned}
  \end{equation*}
  
  \fi

  As intended, the logic \prefFOdw we defined in Section~\ref{sec:def_preffo} is just the restriction of the above definition when the set \Const of constant symbols is empty.

  We write $\dword\prefFOdweq\dwordbis$ when \dword and \dwordbis agree on every \prefFOdw-sentence of quantifier depth at most $k$.

  As per usual, when introducing a new logic, it is useful to find a characterization via an \EF game. We introduce the suiting \EF game below, and show its equivalence to \prefFOdw in Proposition~\ref{prop:EF}.

\begin{definition}
  Let $k$ be an integer, and $\dword, \dwordbis$ be two (finite or infinite) data words on the same alphabet with constant set \Const. The \emph{$k$-round prefix \EF game} on data words is played by two players: the Spoiler, who tries to highlight the differences between \dword and \dwordbis, and the Duplicator, whose goal is to show that \dword and \dwordbis look alike. There are three kinds of pebbles available to place on each data word, to which both players have access: \emph{process pebbles} \procpebble (in \dword) and \procpebblebis (in \dwordbis), \emph{bounding pebbles} \boundpebble (in \dword) and \boundpebblebis (in \dwordbis) and \emph{prefix pebbles} \prefpebble (in \dword) and \prefpebblebis (in \dwordbis), for $i\in\{1,\dots,k\}$. These pebbles will be placed on elements of \dword and \dwordbis by the players. With a slight abuse of notation, we will sometimes identify a pebble with the element on which it has been placed.

  In round $i$, the Spoiler starts by choosing one of the two data words. Let us assume first that they choose to play in \dword. Then the Spoiler has three possibilities: they can make a \emph{process move}, a \emph{bounding move} or a \emph{prefix move} by respectively placing \procpebble, \boundpebble or \prefpebble on some element of \dword, with the following restriction.
  \begin{enumerate}
  \item In a process move, \procpebble must be placed on a process element of \dword.
  \item In a bounding move, pebble \boundpebble must be placed on a position of \dword whose class does not already possess a bounding pebble, nor a bounding constant.
  \item In a prefix move, the Spoiler must place \prefpebble on an position $e$ of \dword such that either some bounding pebble \boundpebble[j] (for $j<i$) or some bounding constant is in the same class as $e$ and larger than $e$ with respect to $\leq$. 
  \end{enumerate}
  Then the Duplicator responds by placing the corresponding pebble in \dwordbis (i.e. \procpebblebis if the Spoiler placed \procpebble, \boundpebblebis if they placed \boundpebble, or \prefpebblebis if they placed \prefpebble) with the same constraints.
  Conversely, if the Spoiler decided to play in \dwordbis, then the Duplicator would respond by placing either \procpebble, \boundpebble or \prefpebble (depending on whether the Spoiler played \procpebblebis, \boundpebblebis or \prefpebblebis) in \dword.

  At the end of round $i$, let us assume that pebbles $\pebble[1],\dots,\pebble$ have been played on \dword (where \pebble[j] is one of \procpebble[j], \boundpebble[j] or \prefpebble[j], depending on whether round $j$ saw a process move, a bounding move or a prefix move), and $\pebblebis[1],\dots,\pebblebis$ on \dwordbis. Then the Spoiler immediately wins if any of the following equivalences fail, where $\move,\movebis$ are either the interpretations of the same constant symbol in \dword, \dwordbis, or the elements on which $\pebble[j],\pebblebis[j]$ (for some $j\in\{1,\dots,i\}$) have been placed (and similarly for $\movetwo, \movetwobis$):
  \begin{itemize}
  \item $\dword\models\move=\movetwo$ iff $\dwordbis\models\movebis=\movetwobis$
  \item $\dword\models\ProcS(\move)$ (resp. $\ProcE(\move)$) iff $\dwordbis\models\ProcS(\movebis)$ (resp. $\ProcE(\movebis)$)
  \item $\dword\models a(\move)$ iff $\dwordbis\models a(\movebis)$, for $a\in\vocab$
  \item $\dword\models\move\lesssim\movetwo$ iff $\dwordbis\models\movebis\lesssim\movetwobis$
  \item $\dword\models\move\sim\movetwo$ iff $\dwordbis\models\movebis\sim\movetwobis$
  \end{itemize}
  The Duplicator wins if the game reaches the end of round $k$ and if the Spoiler still doesn't win. In that case, we write $\dword\prefEFeq\dwordbis$.

\end{definition}

The prefix \EF game has been tailored to capture the expressive power of \prefFOdw:

\begin{proposition}
  \label{prop:EF}
  Let $k$ be an integer, and let \dword, \dwordbis be two data words on the same alphabet (with constants). Then $\dword\prefEFeq\dwordbis$ if and only if $\dword\prefFOdweq\dwordbis$.
\end{proposition}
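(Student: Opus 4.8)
The plan is to prove the standard equivalence between the logical indistinguishability relation $\prefFOdweq$ and the existence of a winning strategy for Duplicator in the $k$-round prefix \EF game, by induction on $k$, exactly as in the classical Ehrenfeucht–Fra\"iss\'e theorem, but taking care that the three bookkeeping constraints (process vs.\ bounding vs.\ prefix moves) match the three quantification rules of \prefFOdw. More precisely, I would prove the slightly stronger statement with parameters: for all data words $\dword,\dwordbis$ over the same alphabet with constants, and for all partial assignments given by already-placed pebbles $\pebble[1],\dots,\pebble[m]$ in $\dword$ and $\pebblebis[1],\dots,\pebblebis[m]$ in $\dwordbis$ that respect the typing discipline (process pebbles on process elements, at most one bounding pebble per class, prefix pebbles below a bounding pebble of their class), Duplicator wins the remaining $(k-m)$-round game from this position if and only if $(\dword,\bar\pebble)$ and $(\dwordbis,\bar\pebblebis)$ satisfy the same \prefFOdw formulas $\formule(\procVar;\boundVar;\prefVar)$ of quantifier depth at most $k-m$, where the free process/bounding/prefix variables are interpreted by the pebbles of the corresponding kind. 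The constants are handled uniformly by treating them as pre-placed pebbles that are never moved, which is exactly why the appendix first extends \prefFOdw with constants.

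\textbf{Base case.} When $k=0$ (or more generally when no moves remain), Duplicator has already won iff none of the atomic equivalences listed in the game's definition fails. These equivalences are precisely the atomic and negated-atomic formulas of \prefFOdw evaluated on the tuples of constants and placed pebbles: equality, $\ProcS/\ProcE$, the letter predicates $a(\cdot)$, $\lesssim$, and $\sim$. Since quantifier-depth-$0$ formulas are Boolean combinations of these atoms, and the typing discipline guarantees that every atom allowed syntactically in \prefFOdw is among those tested, $\dword$ and $\dwordbis$ agree on all depth-$0$ formulas (under the pebble assignment) iff no atomic test fails, i.e.\ iff Duplicator wins.

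\textbf{Inductive step.} Assume the claim for $k-1$. For the direction ``Duplicator wins $\Rightarrow$ same depth-$k$ theory'', suppose for contradiction that some $\formule$ of depth $k$ holds in one structure and not the other; $\formule$ is a Boolean combination of atoms and formulas $Q x\,\chi$ where $Q$ is one of the three guarded quantifiers of \prefFOdw, so WLOG $\formule = \exists x\,(\text{guard})\land\psi$ with $\psi$ of depth $k-1$, holding in $(\dword,\bar\pebble)$ but not in $(\dwordbis,\bar\pebblebis)$. Pick a witness $e$ in $\dword$. Depending on which of the three production rules $\formule$ comes from, $e$ is a process element, or a position whose class has no bounding pebble/constant, or a position below some bounding pebble/constant of its class — in each case the guard is exactly the Spoiler-move legality condition, so Spoiler can legally play the corresponding move placing $\pebble[m+1]$ on $e$ in $\dword$. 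Duplicator's winning strategy provides a legal reply $\pebblebis[m+1]$ in $\dwordbis$, and legality of that reply means the same guard holds of $\pebblebis[m+1]$ there. The resulting position is still winning for Duplicator after one fewer round, so by induction the two structures agree on all depth-$(k-1)$ formulas under the extended assignment; in particular $\dwordbis,\bar\pebblebis\pebblebis[m+1]\models\psi$, contradicting $\dwordbis\not\models\formule$. The converse direction ``same depth-$k$ theory $\Rightarrow$ Duplicator wins'' is the usual argument: given a Spoiler move placing a pebble on $e$ (say in $\dword$) satisfying the appropriate guard, the $(k-1)$-depth \prefFOdw-type of $e$ over the current assignment is definable by a single depth-$(k-1)$ formula $\chi_e$ (there are only finitely many such types over a finite alphabet and a bounded number of free variables, so one can take the conjunction of a finite set of representatives), and $\exists x\,(\text{guard})\land\chi_e$ is a legal depth-$k$ \prefFOdw sentence — legal precisely because the guard matches the move type — true in $\dword$, hence true in $\dwordbis$; any witness $e'$ there is a legal Duplicator reply whose extended assignment has the same depth-$(k-1)$ type, so induction keeps Duplicator in a winning position. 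Symmetric reasoning handles Spoiler playing in $\dwordbis$.

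\textbf{Main obstacle.} The only genuinely delicate point is the exact correspondence between the three move-legality conditions and the three guarded quantifiers, and making sure that the $(k-1)$-type formula $\chi_e$ produced in the converse direction is itself a legal \prefFOdw formula of the right ``kind'' for the newly quantified variable (a prefix variable may only be compared via $\lesssim$ and $\sim$ to variables of the allowed kinds, etc.). This is a matter of checking that the syntactic restrictions on which atoms may mention which variables are closed under the operations used — quantification never introduces an atom that was previously forbidden — and that the ``$x\not\sim y$ for all bounding $y$'' side condition on bounding quantification is automatically satisfied once we know the Spoiler's bounding move was legal (the new pebble's class had no prior bounding pebble/constant). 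Everything else is the routine Ehrenfeucht–Fra\"iss\'e induction, with constants carried along as immovable pebbles throughout.
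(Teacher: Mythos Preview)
Your proposal is correct and follows essentially the same approach as the paper's proof: induction on $k$, with the base case reducing to agreement on atomic formulas and the inductive step matching each of the three guarded quantifier forms to the corresponding Spoiler move type. The only stylistic difference is that you carry the already-placed pebbles as free-variable assignments in a parameterised statement, whereas the paper promotes each newly placed pebble to a fresh constant symbol and keeps the induction over \emph{sentences}; you yourself note that this is precisely why the appendix first extends \prefFOdw with constants, so the two formulations are interchangeable.
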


\begin{proof}
  We prove this result by induction on $k$. For $k=0$, notice that both $\dword\prefEFeq[0]\dwordbis$ and $\dword\prefFOdweq[0]\dwordbis$ hold exactly when \dword and \dwordbis agree quantifier-free formulas.

  Suppose that this equivalence hold for some integer $k$, and let us consider two data words \dword and \dwordbis with constant set \Const.

  Let us first assume that $\dword\prefEFeq[k+1]\dwordbis$ and show that in that case, \dword and \dwordbis agree on all \prefFOdw of quantifier depth at most $k+1$. Note that every sentence of \prefFOdw of quantifier depth at most $k+1$ is a boolean combination of sentences quantifier-free sentences, and of sentences of the form
  \begin{equation}
    \label{eq:procex}
    \exists x,\ \Proc(x)\ \land\ \formule(\{x\};\emptyset;\emptyset),
  \end{equation}
  \begin{equation}
    \label{eq:boundex}
    \exists x,\ \neg\Proc(x)\ \land\ \Big(\bigwedge\limits_{c\in\boundConst}x\not\sim c\Big)\ \land\ \formule(\emptyset;\{x\};\emptyset) \,,
  \end{equation}
  or
  \begin{equation}
    \label{eq:prefex}
    \exists x,\ x\lesssim c\ \land\ \formule(\emptyset;\emptyset;\{x\})\qquad \text{where }c\in\boundConst\,,
  \end{equation}
  where \formule is a \prefFOdw formula of quantifier depth at most $k$.
  \\
  By assumption the Spoiler does not immediately win, hence \dword and \dwordbis must agree on all quantifier-free sentences. It is thus enough to show that they agree on all sentences of the form (\ref{eq:procex}), (\ref{eq:boundex}) and (\ref{eq:prefex}). Assume that \dword satisfies such an existential sentence, witnessed by some $\move\in\dword$, and let us prove that \dwordbis also satisfies this formula -- the reverse is symmetric. We distinguish between the three cases.
  \\
  Let us consider the case of a sentence of form (\ref{eq:procex}). By assumption, the Duplicator is able to win the $(k+1)$-round game. In particular, if the Spoiler makes a process move by playing \procpebble[1] on \move, then the Duplicator can respond by placing \procpebblebis[1] on \movebis (which must be a process of \dwordbis by definition of the game) and still win $r$ rounds. This means, if we add a new process constant symbol to \Const and interpret it as \move in \dword and \movebis in \dwordbis, yielding respectively \dwordnew and \dwordbisnew, that $\dwordnew\prefEFeq\dwordbisnew$. By induction hypothesis, we thus get $\dword\prefFOdweq\dwordbis$. Then \movebis is a process element witnessing that \dwordbis also satisfies the sentence $\exists x,\Proc(x)\land\formule(\{x\};\emptyset;\emptyset)$.
  \\
  We deal similarly with formulas of type (\ref{eq:boundex}) and (\ref{eq:prefex}), the only distinction being that in the former the Spoiler starts by playing a bounding move, while they play a prefix move in the latter.

  Conversely, let us prove that if \dword and \dwordbis satisfy the same \prefFOdw-sentences of quantifier depth at most $k+1$, then the Duplicator wins the $(k+1)$-round game. Let us assume that the Spoiler's first move is a process move by placing \procpebble[1] on \move in \dword (the case where they play in \dwordbis is fully symmetric). An easy induction on the quantifier depth shows that up to equivalence, there are only a finite number of formulas of a given quantifier depth. Let $\formule(\{x\},\emptyset,\emptyset)$ be the conjunction of all the formulas of quantifier depth at most $k$ satisfied by \move in \dword. By construction, \dword then satifies the sentence $\exists x,\Proc(x)\land\formule(\{x\};\emptyset;\emptyset)$, which is also true in \dwordbis as it has quantifier depth $k+1$. Let $\movebis\in\dwordbis$ be a witness to that fact. The Duplicator will place \procpebblebis[1] on \movebis (note that this is a valid process move, as \movebis must be a process in \dwordbis). Once again, let us enrich the constant set with a new process constant interpreted in \dword and \dwordbis respectively as \move and \movebis, which yield two data words \dwordnew and \dwordbisnew. By choice of \formule we have $\dwordnew\prefFOdweq\dwordbisnew$ and thus, by induction hypothesis, $\dwordnew\prefEFeq\dwordbisnew$, which precisely mean that the Duplicator can win $k$ rounds after the first, and thus can win the $(k+1)$-round games starting with a process move.
  \\
  When the Spoiler starts with a bounding move on an element \move of \dword, we consider a formula $\formule(\emptyset,\{x\},\emptyset)$ characterizing the set of formulas of quantifier depth at most $k$ satisfied by \move in \dword. Then \[\exists x,\neg\Proc(x)\land\Big(\bigwedge\limits_{c\in\boundConst}x\not\sim c\Big)\land\formule(\emptyset;\{x\};\emptyset)\] has quantifier depth $k+1$ and is valid both in \dword and \dwordbis. Any existential witness \movebis in \dwordbis is necessarily a valid option for a bounding move by the Duplicator, who responds by placing \boundpebblebis[1] on \movebis. For the same reason as in the previous case, the Duplicator can the proceed to win $k$ more rounds of the game.
  \\
  If the Spoiler starts the game with a prefix move in which they place \prefpebble[1] on some element \move, then the rules of the game ensure the existence of a bounding constant $c\in\Const$ such that $x\lesssim c$. In this case, we consider the sentence $\exists x,x\lesssim c\land\formule(\emptyset;\emptyset;\{x\})$ (where once again \formule subsumes all formulas of quantifier depth at most $k$ satisfied by \move in \dword) and conclude as before.
  \\
  All in all, the Duplicator wins the $(k+1)$-round game whenever \dword and \dwordbis are such that $\dword\prefFOdweq\dwordbis$, which concludes the proof.
  \ifcsname fullversion\endcsname
  \else
  \qed
  \fi
\end{proof}

On (finite or infinite) words, a straightforward adaptation of this game (where there is no process pebble, and only one bounding pebble) characterizes \prefFO. 

\end{toappendix}

\begin{lemmarep}\label{lemma:threshold}
  Let $k\in\N$ and let \dword and \dwordbis be two finite or infinite data-words on the same alphabet $\vocab$, such that for every $\type\in\Types$, the number of classes of type \type in \dword and \dwordbis are either the same or both at least $k$. Then \dword and \dwordbis agree on all \prefFOdw-sentences of quantifier depth at most $k$.
\end{lemmarep}

\begin{appendixproof}
  To prove this result, we use the previously introduced \EF games on data-words. In view of Proposition~\ref{prop:EF}, it is enough to show that given $k\in\N$, if for each $k$-type \type for \prefFO \dword and \dwordbis have the same number (up to threshold $k$) of classes of type \type, then $\dword\prefEFeq\dwordbis$.

  Under this assumption, we are going to show that the Duplicator can play in such a way as to preserve the invariant described in the remainder of this paragraph. Let us assume that pebbles $\pebble[1],\dots,\pebble[i]$ and $\pebblebis[1],\dots,\pebblebis[i]$ have been placed respectively on \dword and \dwordbis. Define the equivalence relation $R$ on $\{\pebble[1],\dots,\pebble[i]\}$ (resp. $\widehat R$ on $\{\pebblebis[1],\dots,\pebblebis[i]\}$) where $\move R\movetwo$ iff $\dword\models\move\sim\movetwo$ (resp. $\movebis\widehat R\movetwobis$ iff $\dwordbis\models\movebis\sim\movetwobis$). The first part of the invariant is that the bijection sending $\pebble[j]$ to $\pebblebis[j]$ maps $R$ to $\widehat R$. For every equivalence class $c=\{\pebble[j_1],\dots,\pebble[j_l]\}$ of $R$ (and the corresponding class $\{\pebblebis[j_1],\dots,\pebblebis[j_l]\}$ of $\widehat R$), let $w_c$ be the class (seen as a word) of \dword on which $\pebble[j_1],\dots,\pebble[j_l]$ have been placed, together with one constant for each bounding or prefix pebble among $\pebble[j_1],\dots,\pebble[j_l]$ (there is no need to remember process pebbles) placed at the same position as the corresponding pebble in \dword, and let $\widehat w_c$ be defined similarly and on the same vocabulary, but for \dwordbis. The second part of the invariant is that $w_c\prefEFeq[k-i]\widehat w_c$.

  Let us describe how the Duplicator can enforce this invariant during the play, by induction on the number $i$ of rounds. For $i=0$, there is nothing to show. Assuming the invariant holds after $i$ round, let us consider (without loss of generality) the case where the Spoiler places a pebble \pebble[i+1] among $\{\procpebble[i+1],\boundpebble[i+1],\prefpebble[i+1]\}$ in \dword. We distinguish between two situations:
  \begin{itemize}
  \item Let us first assume \pebble[i+1] is placed on a class where some previous pebble has already been placed, and let $c$ be the corresponding equivalence class of $R$. Our inductive assumption ensures $w_c\prefEFeq[k-i]\widehat w_c$. If $\pebble[i+1]=\procpebble[i+1]$, the Duplicator simply places \procpebblebis[i+1] on the process element of \dwordbis corresponding to $\widehat w_c$, and the invariant for $i+1$ obviously holds. Assume now $\pebble[i+1]=\boundpebble[i+1]$ (resp. \prefpebble[i+1]) has been placed on element \move. Remember that $w_c\prefEFeq[k-i]\widehat w_c$: then the Duplicator in the game between \dword and \dwordbis answers by placing \boundpebblebis[i+1] (resp. \prefpebblebis[i+1]) on \movebis, where \movebis is what the winning strategy for the Duplicator in the game between $w_c$ and $\widehat w_c$ answers when the Spoiler makes their move on \move. The invariant after round $i+1$ follows easily, since $w'_c\prefEFeq[k-i-1]\widehat w'_c$, where $w'_c$ (resp. $\widehat w'_c$) is the extension of the word $w_c$ with a new constant, interpreted as \move (resp. \movebis).
  \item Suppose now that \pebble[i+1] is placed on a class $w$ of \dword on which no previous pebble has been placed. By assumption about \dword and \dwordbis sharing the same number of classes of every type up to threshold $k$, we can find a class $\widehat w$ of \dwordbis which has the same $k$-type for \prefFO as $w$ (i.e. such that $w\prefEFeq\widehat w$), and on which no previous pebble has been placed either. If $\pebble[i+1]=\procpebble[i+1]$, then the Duplicator places \procpebblebis[i+1] on the process element of $\widehat w$, and $w\prefEFeq\widehat w$ in particular entails $w\prefEFeq[k-i-1]\widehat w$. On the other hand, if $\pebble[i+1]=\boundpebble[i+1]$ has been placed on element \move of $w$, then $w\prefEFeq\widehat w$ means that the Duplicator can place \boundpebblebis[i+1] on some element \movebis of $\widehat w$ such that $w'\prefFOeq[k-1]\widehat w'$ (and, a fortiori, $w'\prefFOeq[k-i-1]\widehat w'$) where $w'$ (resp. $\widehat w'$) is the same as $w$ (resp. $\widehat w$) where a new constant is interpreted as \move (resp. \movebis). The invariant after round $i+1$ follows.
  \end{itemize}
  It is straightforward to notice that enforcing this invariant after the $k$ rounds of the game guarantees a win for the Duplicator.
\ifcsname fullversion\endcsname
\else
\qed
\fi
\end{appendixproof}

As a direct corollary of Lemma~\ref{lemma:threshold}, in order to decide whether a data word \dword satisfies an \prefFOdw formula of quantifier depth $k$, it is enough to know, for each $k$-type \type for \prefFO, how many (up to $k$) classes of \dword have type \type.
We shall use this fact later in proofs, and refer to this abstraction of a data word as its \emph{collection} of types.

\subsection{Properties of \prefFO types}\label{sec:lemmas}

Let us now try to understand the behavior of \prefFO on words.
In the following, we fix an alphabet \vocab and an integer $k$.

First, note that the equivalence relation ``have the same \prefFO-type'' is not a congruence in the monoid of finite words. Indeed, one can convince themselves (or prove formally, using the \EF games introduced in the
\ifcsname fullversion\endcsname
appendix
\else
full version of this paper
\fi
) that for any $k\in\N$, the two words
\[u=ababa\cdots aba\]
and
\[v=ababa\cdots abab\]
have the same \prefFO $k$-type (as long as they are long enough with respect to $k$). However, $u\cdot a$ and $v\cdot a$ can be separated by the \prefFO-sentence of quantifier depth $3$ stating the existence of two consecutive $a$'s.

\begin{toappendix}
This easy technical lemma will prove most useful in the following:

\begin{lemma}
  \label{lem:subgame}
  Let $u$ and $u'$ be two words on \vocab, and $n,n'\in\N$ be such that such that $\tpword[u[0\dots n]] = \tpword[u'[0\dots n']]$.

  In the $k$-round prefix \EF game between $u$ and $u'$, if the Spoiler's first move is in $u[0\dots n]$ or in $u'[0\dots n']$, then the Duplicator wins the game.
\end{lemma}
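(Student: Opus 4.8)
The plan is to reduce the $k$-round prefix \EF game between $u$ and $u'$ to the $k$-round prefix \EF game between the two prefixes $u[0\dots n]$ and $u'[0\dots n']$. Since these two prefixes have the same depth-$k$ \prefFO-type, the version of Proposition~\ref{prop:EF} for words tells us that the Duplicator has a winning strategy $\tau$ in the latter game. The crucial structural fact is that on words the first round of the game can only be a bounding move (there are no process pebbles, and a prefix move requires a bounding pebble to be already in place), and that once the bounding pebble has been placed on a position $p$ of $u$ (and $p'$ of $u'$), every subsequent move is a prefix move, which the rules force onto a position strictly to the left of $p$ (resp. $p'$). Thus, as soon as the opening bounding move stays within $u[0\dots n]$, the whole play remains confined to $u[0\dots n]$ and $u'[0\dots n']$, where $\tau$ applies verbatim.

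In detail, suppose first that the Spoiler's opening move is in $u[0\dots n]$, i.e. places the bounding pebble on a position $p\le n$ of $u$. This is also a legal opening bounding move in the game between the prefixes, so the Duplicator simulates it there and lets $\tau$ answer with the bounding pebble on some position $p'\le n'$ of $u'[0\dots n']$; the Duplicator plays the same, placing the bounding pebble on position $p'$ of $u'$. From round $2$ onwards every move is a prefix move: a prefix move by the Spoiler in $u$ lands on a position $q<p\le n$ which belongs to $u[0\dots n]$ and is a legal prefix move there (the bounding pebble being in the same class, to its right), and symmetrically in $u'$. So the Duplicator keeps translating the Spoiler's moves into the game between $u[0\dots n]$ and $u'[0\dots n']$ and copying $\tau$'s replies. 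At every stage the pebbles in the actual game occupy exactly the same positions as in the simulated game, so the quantifier-free atomic types of the pebbled tuples (equalities, letters, the order relation, $\sim$) coincide, the Duplicator is never caught, and survives all $k$ rounds. The case where the Spoiler opens in $u'[0\dots n']$ is entirely symmetric, a Duplicator winning strategy being able to cope with the Spoiler playing in either word.

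I do not expect a real obstacle: the lemma just formalises the intuition that the ``prefix'' discipline of \prefFO makes a finite prefix of a word self-contained as far as the game is concerned. The single point that genuinely uses the shape of the logic is the assertion that, after the opening move, nothing can escape the chosen prefix; this would fail without the single-bounding-pebble, earlier-only-prefix-pebble restriction, as witnessed by the $ababa\cdots$ example recalled just above, where \prefFO-types are not even a congruence. The rest is routine verification that the simulated moves are legal and that positions and atomic facts match across the two games.
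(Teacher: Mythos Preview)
Your proposal is correct and follows essentially the same approach as the paper: both note that on words the first move must be a bounding move, transfer the Duplicator's winning strategy from the game between $u[0\dots n]$ and $u'[0\dots n']$ (available since these prefixes have the same depth-$k$ type), and observe that the prefix discipline then confines all subsequent pebbles to those prefixes. Your version is simply more explicit about why the simulated moves remain legal and why the atomic facts transfer.
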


\begin{proof}
  By assumption, the Spoiler plays their first move (which must be a bounding move) in $u[0\dots n]$ or $u'[0\dots n']$. The Duplicator responds as they would in the $k$-round prefix \EF game between $u[0\dots n]$ and $u'[0\dots n']$.

    The bounding pebbles are thus in the prefixes of $u$ and $u'$ for which the Duplicator has a winning strategy, and have been placed according to this strategy. By definition of the game, no pebble will ever be placed outside of these prefixes. The Duplicator can thus follow the same strategy to win the game between $u$ and $u'$ where the Spoiler has played their first move in $u[0\dots n]$ or $u'[0\dots n']$.    
    \ifcsname fullversion\endcsname
    \else
    \qed
    \fi
\end{proof}

When one word is the prefix of the other, we immediately get the following consequence:

\begin{corollary}
  \label{cor:prefix_win}
  Let $u$ and $v$ be two words such that $u$ is a prefix of $v$, and let us consider the prefix \EF game between $u$ and $v$.

  If the Spoiler's first move is in $u$, then the Duplicator wins the game, no matter the number of rounds.
\end{corollary}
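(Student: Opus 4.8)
The plan is to read off the statement as an immediate special case of Lemma~\ref{lem:subgame}. Let $n$ denote the last position of $u$ (the case $u=\motvide$ being vacuous, since the Spoiler then has no position of $u$ on which to make a first move). Since $u$ is a prefix of $v$, the prefix $u[0\dots n]$ of $u$ and the prefix $v[0\dots n]$ of $v$ are one and the same word, namely $u$ itself; in particular $\tpword[u[0\dots n]] = \tpword[v[0\dots n]]$ for every $k$, trivially. This is precisely the hypothesis of Lemma~\ref{lem:subgame} with the roles played by $u' := v$ and $n' := n$.

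Next I would note that, by definition of the prefix \EF game on words, the Spoiler's first move is a bounding move, and saying that it is ``in $u$'' means the bounding pebble is placed on a position of $u = u[0\dots n]$. Hence Lemma~\ref{lem:subgame} applies verbatim and yields that the Duplicator wins the $k$-round prefix \EF game between $u$ and $v$, for every $k$; equivalently $u\prefEFeq v$ for all $k$, which is the claim.

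I do not anticipate any genuine obstacle: all the work is done by Lemma~\ref{lem:subgame}, and the corollary merely records that a word is the length-$|u|$ prefix of any word extending it. The only points worth spelling out are the position-indexing convention, the degenerate case $u=\motvide$ mentioned above, and the remark that if $v$ is infinite then $u$ is still a finite prefix and the argument is unchanged.
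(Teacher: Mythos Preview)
Your proposal is correct and matches the paper's approach exactly: the paper presents this corollary as an immediate consequence of Lemma~\ref{lem:subgame} with no further argument, and you have simply spelled out the instantiation (take $u':=v$ and $n'=n$ the last position of $u$, so that $u[0\dots n]=v[0\dots n]$ are literally the same word and hence share the same $k$-type for every $k$).
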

\end{toappendix}

\begin{lemmarep}
  \label{lem:sandwich}
  Let $u,v$ be words such that $\tpword[u] = \tpword[u \cdot v] = \type$. Then for every prefix $w$ of $v$, $\tpword[u\cdot w] = \type$.
\end{lemmarep}

\begin{appendixproof}
  In order to prove this result, we fix such a $w$ and show that the Duplicator has a winning strategy in the $k$-round prefix \EF game between $u$ and $u\cdot w$.

  If the Spoiler first plays in $u$, then Lemma~\ref{lem:subgame} guarantees the Duplicator wins the game. Thus, we can assume that the Spoiler's first move is a bounding move in $u\cdot w$. By Lemma~\ref{lem:subgame}, the Duplicator wins the $k$-round game between $u\cdot w$ and $u\cdot v$ where the Spoiler starts by playing in $u\cdot w$. To win the game between $u$ and $u\cdot w$ when Spoiler starts in $u\cdot w$, the Duplicator combines their strategy in the game between $u$ and $u\cdot v$ and their strategy in the game between $u\cdot v$ and $u\cdot w$ where Spoiler starts in $u\cdot w$.
  \ifcsname fullversion\endcsname
  \else
  \qed
  \fi
\end{appendixproof}

This lemma has a straightforward consequence in the case of infinite words: for every infinite word $u$, there exists some $\type\in\Types$ and an index $n\in\N$ such that for every $m\geq n$, $u[0\dots m]$ has type \type. Indeed, \Types is finite, thus there must be some type appearing infinitely often in the prefixes of $u$. Lemma~\ref{lem:sandwich} ensures that such a type is unique. We refer to this type as the \emph{stationary type} of $u$.

Next, we prove that the stationary type of an infinite word is none other than its own type:

\begin{lemmarep}\label{lemma:stat type prelim}
  Let $u$ be an infinite word with stationary type $\type\in\Types$. Then $\tpword[u] = \type$.
\end{lemmarep}

\begin{appendixproof}
  Let $n$ be such that $u[0\dots n]$ has type \type. We show that the Duplicator has a winning strategy in the $k$-round prefix \EF game between $u$ and $u[0\dots n]$.

  Due to Corollary~\ref{cor:prefix_win}, the Spoiler cannot win by playing their first move in $u[0\dots n]$. Let us therefore assume that the Spoiler plays first on position $m$ of $u$. By assumption, there exists a position $m'>m$ such that $u[0\dots m']$ has type \type, i.e. $u[0\dots n]\prefFOeq u[0\dots m']$. Combining the winning strategies for the Duplicator in the game between $u[0\dots n]$ and $u[0\dots m']$, and in the game between $u[0\dots m']$ and $u[0\dots m]$ where the Spoiler plays first in $u[0\dots m]$ (following from Corollary~\ref{cor:prefix_win}), the Duplicator can win the $k$-round prefix \EF game between $u[0\dots n]$ and $u[0\dots m]$ when the Spoiler plays first in $u[0\dots m]$.

  All in all, $u\foeq u[0\dots n]$, i.e. $u$ has type \type.
  \ifcsname fullversion\endcsname
  \else
  \qed
  \fi
\end{appendixproof}

Combining those results we get the following:

\begin{corollary}\label{lemma:stationary type}
Let $k \in \N$, $u$ be an infinite word, and $\type=\preffotpword[u]$.
Then there exists $n \in \N$ such that for all $m > n$, $u[0 \dots m]$ also has type \type.
\end{corollary}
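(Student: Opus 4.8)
The plan is to obtain this statement as a pure bookkeeping step that packages together Lemma~\ref{lem:sandwich} and Lemma~\ref{lemma:stat type prelim}, since all the substantive work has already been done in those two lemmas. First I would use the finiteness of \Types: among the infinitely many prefixes $u[0\dots m]$ of the infinite word $u$, some type $\typebis\in\Types$ must occur as $\tpword[u[0\dots m]]$ for infinitely many $m$. Fix such a $\typebis$ together with an index $n$ such that $\tpword[u[0\dots n]]=\typebis$.

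Next I would upgrade ``$\typebis$ occurs infinitely often'' to ``$\typebis$ occurs for every $m\geq n$'', which is exactly the statement that $\typebis$ is the stationary type of $u$. Given $m\geq n$, pick by the recurrence property some $m'\geq m$ with $\tpword[u[0\dots m']]=\typebis$. Writing $u[0\dots m'] = u[0\dots n]\cdot v$, the prefix $u[0\dots m]$ has the form $u[0\dots n]\cdot w$ for a prefix $w$ of $v$; since $\tpword[u[0\dots n]] = \tpword[u[0\dots n]\cdot v] = \typebis$, Lemma~\ref{lem:sandwich} (applied with the word $u[0\dots n]$ in the role of its ``$u$'') yields $\tpword[u[0\dots m]] = \typebis$. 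Finally, Lemma~\ref{lemma:stat type prelim} identifies this stationary type with the type of the infinite word itself, i.e. $\typebis = \tpword[u] = \type$. Hence for all $m>n$ we have $\tpword[u[0\dots m]] = \type$, as desired.

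There is no genuine obstacle here — the corollary is precisely the assembly of the two preceding lemmas, whose proofs (via the prefix \EF games and Corollary~\ref{cor:prefix_win}) contain all the difficulty. The only points that require a moment's care are invoking finiteness of \Types to get a recurrent type, and checking that $u[0\dots m]$ can be written in the form needed to apply Lemma~\ref{lem:sandwich}; both are immediate.
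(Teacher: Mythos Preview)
Your proposal is correct and follows essentially the same approach as the paper: the paper establishes the existence of the stationary type via finiteness of \Types together with Lemma~\ref{lem:sandwich} (in the paragraph just before Lemma~\ref{lemma:stat type prelim}), and then the Corollary is obtained by combining this with Lemma~\ref{lemma:stat type prelim}, exactly as you do. Your write-up simply makes explicit the sandwich argument that the paper leaves as a one-line remark.
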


Let us now try to understand the structure of \Types.
We consider the binary relation \preordtp defined on \Types as follows: $\type\preordtp\typebis$ if and only if there exists two finite words $u$ and $v$ such that $\tpword[u]=\type$ and $\tpword[uv]=\typebis$.
We refer to $(\Types,\preordtp)$ as the \emph{graph of \prefFO-types} of depth $k$.

The following lemmas break down its properties.
First, the choice of $u$ and $v$ above does not really matter:

\begin{lemmarep}
  \label{lem:preordtp_forall}
  Let $k\geq 1$, let \type and \typebis be such that $\type\preordtp\typebis$ and let $w$ be a finite word such that $\tpword[w]=\type$. There exists some finite word $w'$ such that $\tpword[ww']=\typebis$.
\end{lemmarep}

\begin{appendixproof}
  By definition, there exists some finite words $u,v$ such that $\tpword[u]=\type$ and $\tpword[uv]=\typebis$.

    We can assume that $u\neq\motvide$, for otherwise $w=u$ and there is nothing to show.

    The Duplicator has a winning strategy in the $k$-round prefix \EF game between $u$ and $w$. Let $n$ be the position in $u$ of their response when the Spoiler plays a bounding move on the last position of $w$ (which we assumed exists). Let us call \strat the rest of Duplicator's strategy when these first bounding moves have been played: \strat thus allows the Duplicator to win for $k-1$ more prefix moves made in $w$ and $u[1\dots n]$.

    With this strategy fixed, let us show that \[w':=u[n+1\dots]\cdot v\] is a good candidate for our purpose, i.e. that $\tpword[w\cdot w']=\typebis$ for such a choice of $w'$. To that end, we describe a winning strategy for the Duplicator in the $k$-round prefix \EF game between $u\cdot v$ and $w \cdot u[n+1\dots]\cdot v$.

    If the Spoiler makes their bounding move in the prefix $u$ of $u\cdot v$ or in the prefix $w$ of $w\cdot u[n+1\dots]\cdot v$, Lemma~\ref{lem:subgame} ensures the Duplicator wins the game.
    
    Let us now assume that the Spoiler plays their bounding move in the suffix $u[n+1\dots]\cdot v$ of $w\cdot u[n+1\dots]\cdot v$ (resp. in the suffix $v$ of $uv$). In this case, the Duplicator answers by playing on the corresponding element of the suffix $u[n+1\dots]\cdot v$ of $uv$ (resp. the corresponding element of the suffix $v$ of $w\cdot u[n+1\dots]\cdot v$). Following this, the Duplicator's strategy is
    \begin{itemize}
    \item to play tit-for-tat whenever the Spoiler plays in either of the suffixes $u[n+1\dots]\cdot v$, and
    \item to follow \strat whenever the Spoiler plays in the prefix $w$ of $w\cdot u[n+1\dots]\cdot v$ or in the prefix $u[1\dots n]$ of $uv$.
    \end{itemize}
    By definition of \strat, this strategy allows the Duplicator to win for $k-1$ additional rounds.

  We have covered every case, which all lead to the Duplicator winning the $k$-round prefix game between $u\cdot v$ and $w\cdot u[n+1\dots]\cdot v$, thus showing that $\tpword[ww']=\typebis$ for $w':=u[n+1\dots]\cdot v$.
  \ifcsname fullversion\endcsname
  \else
  \qed
  \fi
\end{appendixproof}

\begin{figure}
  \centering
  \begin{tikzpicture}
    [scale=2,
      every place/.style={draw=gray,fill=gray!10,minimum size=2cm},
      every token/.style={draw=gray,fill=gray!10,minimum size=1mm}
    ]
    \node[place,
    ] (vide) at (0,0) {$\smallpreffotpword[\motvide][2]$};
    \node[place,
    ] (aE) at (-1,-.9)  {$\smallpreffotpword[\letterE][2]$};
    \node[place,
    ] (aEaE) at (-1,-2.2)  {$\smallpreffotpword[\letterE\letterE][2]$};
    \node[place,
    ] (aS) at (1,-.9)  {$\smallpreffotpword[\letterS][2]$};
    \node[place,
    ] (aSaS) at (1,-2.2) {$\smallpreffotpword[\letterS\letterS][2]$};
    \foreach \s/\t/\dir in {vide/aE/right to,aE/aEaE/right to,vide/aS/left to,aS/aSaS/left to} 
    \draw (\s) edge[gray,-\dir,thick] (\t);
  \end{tikzpicture}
  \caption{A partial representation of \Types[2] for $\vocabE=\{\letterE\}$ and $\vocabS=\{\letterS\}$. Types of words containing both \letterE and \letterS have been omitted, as data classes cannot have such a type.}
  \label{fig:types}
\end{figure}

Second, \preordtp is an order:
\begin{lemmarep}\label{lemma: preordtp order}
  The binary relation \preordtp is a partial order on \Types, with minimum \preffotpword[\motvide].
\end{lemmarep}

\begin{appendixproof}
  Reflexivity is straightforward, and so is the fact that for every $\type\in\Types$, $\preffotpword[\motvide]\preordtp\type$.

    To check that \preordtp is transitive, let us consider three types such that $\type\preordtp\typebis\preordtp\typeter$. By definition, there exists two finite words $u$ and $u'$ such that $\preffotpword[u]=\type$ and $\preffotpword[uu']=\typebis$. Lemma~\ref{lem:preordtp_forall} entails the existence of some finite word $u''$ such that $\preffotpword[uu'u'']=\typeter$, which concludes the proof.

  It only remains to check the antisymmetry of \preordtp. Let $\type,\typebis\in\Types$ be such that $\type\preordtp\typebis$ and $\typebis\preordtp\type$. By Lemma~\ref{lem:preordtp_forall}, there must exist three finite words $u,u',u''$ such that $\preffotpword[u]=\preffotpword[uu'u'']=\type$ and $\preffotpword[uu']=\typebis$. From there, Lemma~\ref{lem:sandwich} derives the equality of \type and \typebis.
  \ifcsname fullversion\endcsname
  \else
  \qed
  \fi
\end{appendixproof}

From those two lemmas, we conclude that $(\Types,\preordtp)$ can be seen as a finite directed tree rooted in $\preffotpword[\motvide]$. This is illustrated in Figure~\ref{fig:types}.

\section{Synthesis and token games}\label{sec:synthesis}
In this section we define the standard synthesis game, and then give equivalent games that are more suitable for our purpose.

\subsection{Standard synthesis game}
\label{sec:standard_game}
Given a formula \formule, the \emph{standard synthesis game} is a game played between two players, \Sys and \Envi, who collaborate to create a data word.
\Sys's goal is to make the created data word satisfy \formule, while \Envi wants to falsify it.
Formally, a \emph{strategy} for \Sys is a function $\strat: \DW \to (\vocabS \times \ProcS) \cup \{\varepsilon\}$ which given a data word created so far (the \emph{history}) returns either an action and process to play on, or passes its turn on output $\varepsilon$.
A data word $\dw = (a_0,p_0) (a_1,p_1) \dots$ is \emph{compatible} with $\strat$ if for all $i$, $a_i \in \vocabS$ implies $\strat(\dw[][][0 \dots i-1]) = (a_i,p_i)$.
Furthermore, $\dw$ is \emph{fair} with $\strat$ if either $\dw$ is finite and $\strat(\dw) = \varepsilon$, or $\strat(\dw[][][0 \dots i]) \neq \varepsilon$ for infinitely many $i \in \N$ implies $a_i \in \vocabS$ for infinitely many $i \in \N$.
Intuitively, a fair data word prevents the pathological case where \Sys wants to do some action but \Envi forever prevents it by continually playing its own actions instead.
Strategy $\strat$ is said to be \emph{winning} if all compatible and fair data words satisfy \formule.
\Sys wins a synthesis game if there exists a winning strategy for \Sys.

The \emph{existential synthesis problem} asks, for a given alphabet \vocab and formula \formule, whether there exists a set of processes $\Proc=\ProcS\uplus\ProcE$ such that \Sys wins the corresponding synthesis game.
In the case of \prefFOdw, since the logic can only compare process identities with respect to equality, it is easy to see that the actual sets \ProcS and \ProcE do not matter: only their cardinality does.
With that in mind, we slightly reformulate the existential synthesis problem to ask whether there exists a pair $(\nS, \nE) \in \N^2$ such that \Sys wins the synthesis game for all sets of processes $\ProcS$ and $\ProcE$ of respective size \nS and \nE.
If $(\nS, \nE)$ is such a pair, we say that \Sys has a $(\nS, \nE)$-winning strategy for \formule.

\subsection{Symmetric game}
Notice that the standard synthesis game is asymmetric: while \Envi can interrupt \Sys at any point (provided that \Sys gets the possibility to play infinitely often if they want to), \Sys does not choose exactly the timing of their moves. In particular, \Sys is never guaranteed to be able to play successive moves in the game. Let us define a variation of the game, which turns out to be equivalent with respect to the logic \prefFOdw, in which \Sys can play arbitrarily many successive moves. This makes the game symmetric, and will make the following proofs simpler.

The \emph{symmetric game} is defined in the same way as the standard game, with the following exceptions.
Instead of a function from $\DW$ to $(\vocabS \times \ProcS) \cup \{\varepsilon\}$, a strategy for \Sys is now a function from $\DW$ to $(\vocabS \times \ProcS)^\star \cup \{\varepsilon\}$, i.e. \Sys is allowed to play an arbitrary (but finite) amount of actions at once.
Moreover, players strictly alternate, starting with \Envi. 
The rest is defined as in the standard game.
It is obvious that \Sys has an easier time winning the symmetric game than the standard game. It turns out the symmetric game offers \Sys no advantage when the relative positions of the classes are incomparable:

\begin{lemmarep}\label{lemma:symmetric}
  Let $\formule$ be a \prefFOdw-sentence, and $\nS,\nE\in\N$. \Sys has an $(\nS,\nE)$-winning strategy for $\formule$ in the symmetric game if and only if \Sys has an $(\nS,\nE)$-winning strategy for $\formule$ in the standard game.
\end{lemmarep}

\begin{appendixproof}
It is straightforward to notice that if \Sys has a winning strategy for the standard game, then the same strategy is winning for the symmetric game. 

For the reverse direction, we need an additional observation regarding the logic \prefFOdw.
In general, given a data word $\dw$, in any reordering of the positions of $\dw$ such that the class of each process is left unchanged, then the satisfaction value of any formula in \prefFOdw is the same as for the original word.
Indeed, since the satisfaction of \prefFOdw only depends on the collection of types of processes and not their relative ordering, a \prefFOdw formula cannot distinguish between the two data words.
In other terms, we can freely swap any two consecutive positions in a data words as long as they do not share the same process, while keeping the same satisfaction value for any \prefFOdw formula.
Formally, we say that two data words $\dw[1]$ and $\dw[2]$ are \emph{equivalent}, denoted by $\dw[1] \equiv \dw[2]$, if $\dw[2]$ is a reordering of $\dw[1]$ and for all processes $p$, the class of $p$ in $\dw[1]$ and $\dw[2]$ are the same.
As explained above, if two data words are equivalent, then they satisfy exactly the same \prefFOdw formulas; formally, one shows that the natural strategy for the Duplicator in the prefix \EF game at any depth (which consists of responding to a Spoiler's move by playing on the same position of the same class) is winning.

Now, let us assume that \Sys has a $(\nS,\nE)$-winning strategy $\strat$ in the symmetric game.
We build a strategy $\strat'$ in the standard game using additional memory $m(\dw) = (\dw[\sys], \dw[\env], \dw[\equiv]) \in {(\DW)}^3$ that depends on the history $\dw$.
Intuitively, $\strat'$ tries to mimic $\strat$, and to this end $\dw[\sys]$ is a word played by $\strat$ that $\strat'$ tries to play bit by bit.
However \Envi actions can happen in between \Sys actions.
Those are then stored in $\dw[\env]$, a queue of actions made by \Envi to be acknowledged later by $\strat'$.
That is, when $\strat'$ finishes playing the word that was stored in $\dw[\sys]$ and needs to consult $\strat$ to see what is the next word to be played, then all of $\dw[\env]$ is added to the history as if they had just happened at this point.
Finally, $\dw[\equiv]$ records a history in the symmetric game such that $\dw[\equiv] \cdot \dw[\env]$ is equivalent to the current history $\dw$ in the standard game and such that $\dw[\equiv] \cdot \dw[\sys]$ is a $\strat$-compatible history.

First let us define $\strat'$. On a given history $\dw \in \DW$ and with memory $m(\dw) = (\dw[\sys],\dw[\env],\dw[\equiv])$, $\strat'$ is defined as follows.
If $\dw[\sys] = (a_0, p_0) \dots$ is not empty, then $\strat'(\dw) = (a_0,p_0)$.
Otherwise $\dw[\sys] = \motvide$, and with $\strat(\dw[\equiv] \cdot \dw[\env]) = (a_0,p_0) \dots$ we let $\strat'(\dw) = (a_0,p_0)$.
If $\strat(\dw[\equiv] \cdot \dw[\env]) = \varepsilon$ instead, then $\strat'(\dw) = \varepsilon$ as well.
Let us now show how the memory is updated after each action.
The initial memory is $m(\motvide) = (\motvide, \motvide, \motvide)$.
Suppose now that the current history is $\dw$ and the current memory is $m(\dw) = (\dw[\sys],\dw[\env],\dw[\equiv])$.
On an \Envi action $(a,p)$, we update the memory to $m(\dw \cdot (a,p)) = (\dw[\sys],\dw[\env] \cdot (a,p),\dw[\equiv])$.
On a \Sys action $(a,p)$, if $\dw[\sys] = (a,p) \cdot \dw[\sys][-]$ then $m(\dw \cdot (a,p)) = (\dw[\sys][-],\dw[\env] ,\dw[\equiv] \cdot (a,p))$.
If $\dw[\sys] = \motvide$, and $\strat(\dw[\equiv] \cdot \dw[\env]) = (a,p) \cdot \dw[][\strat]$, then we let $m(\dw \cdot (a,p)) = (\dw[][\strat],\motvide ,\dw[\equiv] \cdot \dw[\env] \cdot (a,p))$.
Otherwise $m(\dw \cdot (a,p))$ is undefined.

We need a few properties regarding $m$ before showing that $\strat'$ is winning.
Let $\dw$ be a history in the standard game and let $m(\dw) = (\dw[\sys],\dw[\env],\dw[\equiv])$.
First we show by induction that $\dw \equiv \dw[\equiv] \cdot \dw[\env]$.
The initial case is trivial.
Suppose now that $\dw \equiv \dw[\equiv] \cdot \dw[\env]$, and let $(a,p)$ be the next action with $m(\dw \cdot (a,p)) = (\dw[\sys][2],\dw[\env][2],\dw[\equiv][2])$.
If $a$ is an \Envi action or if $\dw[\sys] = \motvide$, then we instantly get that $\dw \cdot (a,p) \equiv \dw[\equiv][2] \cdot \dw[\env][2]$ by the definition of $m$ and the induction hypothesis.
If $a$ is a \Sys action and $\dw[\sys] \neq \motvide$, then notice that since $\dw[\env]$ entirely consists of \Envi actions, none of those involve process $p$ which is a \Sys process.
Thus $\dw[\equiv] \cdot (a,p) \cdot \dw[\env] \equiv \dw[\equiv] \cdot \dw[\env] \cdot (a,p)$, and the rest follows.
The facts that $\dw[\equiv] \cdot \dw[\sys]$ is always $\strat$-compatible and that $m(\dw)$ is always defined on any $\strat'$-compatible history are also straightforward to prove inductively.

Finally, it remains to be shown that $\strat'$ is winning in the standard game.
Let $\dw$ be a $\strat'$-compatible $\strat'$-fair history in the standard game.

If $\dw$ is finite, then necessarily $\strat'(\dw) = \varepsilon$. 
The only possible way for this to occur is that, with $m(\dw) = (\dw[\sys],\dw[\env],\dw[\equiv])$, we have that $\dw[\sys] = \motvide$ and $\strat(\dw[\equiv] \cdot \dw[\env]) = \varepsilon$.
In that case, since $\strat$ is winning then $\dw[\equiv] \cdot \dw[\env]$ must satisfy \formule. Since $\dw \equiv \dw[\equiv] \cdot \dw[\env]$, then so does $\dw$.

If $\dw$ is infinite, then either $\strat'$ only outputs $\varepsilon$ from some point on, in which case the previous argument also shows that $\dw$ satisfies \formule, or $\strat'$ outputs some action infinitely often.
By the fairness constraint, it means that \Sys actions have been made infinitely often in $\dw$.
Let $m(\dw[][][0 \dots i]) = (\dw[\sys][i], \dw[\env][i], \dw[\equiv][i])$ denote the memory after the $i$-th first actions in $\dw$.
Let $\dw[\equiv]$ be the limit of the sequence of $\dw[\equiv][i] \cdot \dw[\env][i]$, which is well-defined by definition of the memory.
Since for all $i$ the ``buffer'' $\dw[\sys][i]$ only contains a finite amount of \Sys actions to be played by $\strat'$, it follows that $\dw[\sys][i] = \motvide$ for infinitely many $i$.
At those points, we have that $\dw[][][0 \dots i] \equiv \dw[\equiv][i] \cdot \dw[\env][i]$ and that $\dw[\equiv][i] \cdot \dw[\env][i]$ is $\strat$-compatible, as $\dw[\sys][i]$ is empty and adding \Envi actions to $\dw[\equiv][i]$ does not change its $\strat$-compatibility.
Thus, $\dw[\equiv]$ is an infinite $\strat$-compatible data word, and for infinitely many $i$ we have that $\dw[][][0 \dots i] \equiv \dw[\equiv][][0 \dots i]$.
From this, we deduce that $\dw \equiv \dw[\equiv]$.
Indeed if that was not the case then either $\dw$ is not a reordering of $\dw[\equiv]$, which means that there is some finite position $i$ in $\dw$ or $\dw[\equiv]$ that does not have a counterpart in the other data word, contradicting the equivalence between the next two prefixes.
Or, there is a class $p$ in $\dw$ that differs from $\dw[\equiv]$, which again means that they differ on some finite position in that class that corresponds to some position $i$ in $\dw$ and $\dw[\equiv]$, again contradicting the infinitely many prefixes that are equivalent.
To sum up, $\dw$ is equivalent to $\dw[\equiv]$, and $\dw[\equiv]$ satisfies \formule because $\strat$ is winning in the symmetric game, therefore $\dw$ satisfies \formule and thus $\strat'$ is winning.
\ifcsname fullversion\endcsname
\else
\qed
\fi
\end{appendixproof}

Note that when positions between classes can be compared, the standard and the symmetric game do not necessarily agree on who wins for a given formula: consider for instance the formula making \Sys the winner if they manage to play twice in a row: the symmetric game is easily won by \Sys, but is won by \Envi in the standard setting.

\subsection{Token game on words}
Our goal is to give an alternative game played on a finite arena that would still be equivalent to the symmetric game.
As an intermediate step, consider the (infinite) graph of all finite words over $\vocab$: $\arena_\vocab = (\vocab^\star, \motvide, \Transi)$ where $\vocab^\star$ is the set of nodes, $\motvide$ is the initial node and $\Transi \subseteq \vocab^\star \times \vocab \to \vocab^\star$ is the transition function such that $\Transi(w, a) = w \cdot a$.
We use this graph as an arena over which a number of tokens are located, each token representing one process and its location being the history of what has been played on this process.
We have two sets of \Sys and \Envi tokens, numbering $\nS$ for \Sys tokens and $\nE$ for \Envi tokens, all of which are initially placed in the $\motvide$ node.
Alternatively and starting from \Envi, each player picks one of their token, move it along one edge, and repeat those two operations a finite amount of times.
A player can also opt not to move any of its tokens.
Then the other player does the same, and this goes on forever.
The winning condition is defined by the formula \formule: given a play, we take the limit word reached by each token, and see if the collection of those words satisfy \formule.
It is easy to see that this game is simply a different view of the symmetric game: playing a word $w = (a_0, p_0) \dots (a_n, p_n)$ is equivalent to picking the token representing $p_0$, moving it along the $a_0$-labeled edge, and so on.
Thus \Sys wins in the symmetric game if and only if \Sys wins the token game on words, for any choice of token sets of correct sizes.

\subsection{Token game on types}
We already established as a consequence of Lemma~\ref{lemma:threshold} that we do not actually need to keep track of the full history of each token to know whether \formule (which has depth $k$) is satisfied; counting how many tokens (up to threshold $k$) there are for each $k$-type of \prefFO is enough to decide.
Therefore, the final step is to use the finite graph of \prefFO-types as an arena instead of using the infinite graph of all words.
As for the acceptance condition, the formula \formule is abstracted by a set $\Acc$ of $k$-counting functions of the form $\counting: \Types \to \{0,1,\dots,k-1,k+\}$.
Each such function gives a count of how many tokens (up to $k$) can be found in each type, and the acceptance condition $\Acc$ is exactly the set of those functions that satisfy \formule.

Let the \prefFO-arena of depth $k$ be $\arena_k = (\Types, \tpword[\motvide], \preordtp)$ where \Types is the set of nodes, \tpword[\motvide] is the initial node, and \preordtp is the transition function as defined in Section~\ref{sec:lemmas}.
Given a pair $(\nS,\nE) \in \N^2$, let us fix two arbitrary disjoint sets of \Sys and \Envi \emph{tokens} \TokS and \TokE of sizes \nS and \nE respectively, and let \Tok denote their union.
The \emph{token game} over \Tok for a \prefFOdw formula \formule of depth $k$  is given by the tuple $\game_\formule^\Tok = (\Tok, \arena_k, \Acc)$.

A \emph{configuration} of this game is a mapping $\conf: \Tok \to \Types$ indicating where each token lies in the arena.
The initial configuration \confinit maps every token to the initial type $\type_0 = \tpword[\motvide]$.
Starting from \Envi, players alternatively pick a number of their respective tokens and move them in the arena following transitions from \preordtp.
A \emph{move} for \Sys (resp. \Envi) is a mapping $\moveS: \TokS \to \Types$ (resp. $\moveE: \TokE \to \Types$) indicating where to move each token such that for all $\tok \in \TokS$, $\conf(\tok) \preordtp \moveS(\tok)$ (and similarly for \Envi).
In particular, in a given configuration \conf, an empty \Sys move is simply a move equal to \conf restricted to \TokS, indicating that all \Sys tokens should stay where they are.

Then a \emph{play} \play is a sequence of configuration and moves starting from an \Envi move and alternating players: $\play = \confinit \xrightarrow{\moveE^0} \conf_1 \xrightarrow{\moveS^1} \conf_2 \xrightarrow{\moveE^2} \cdots$ such that each new configuration is the result of applying the previous move to the previous configuration.
Let $\Plays$ denote the set of plays.
A play is \emph{maximal} if it is infinite.

For a given token $\tok \in \Tok$, a maximal play \play generates a sequence of types $\tpword[\motvide] = \type_0 \preordtp \type_1 \preordtp \dots$ such that $\type_i = \conf_{2i}(\tok)$.
Note that it is fine to skip every other configuration, as a token can only be moved during either a \Sys or \Envi move but not both.
This infinite sequence eventually loops in some type $\type_\tok$ forever due to the graph of types being a finite tree.
The \emph{limit configuration} for \play, denoted by $\conf_\infty^\play$, is the configuration that returns $\type_\tok$ for every token $\tok$.
Note that there must exists some $i \geq 0$ such that for all $j > i$, $\conf_j =  \conf_\infty^\play$.
We slightly abuse notations and denote by $\play(\tok)$ the type of token $\tok$ in either the last configuration of \play if it is finite or its limit configuration if it is infinite.
A play is \emph{winning} if its limit configuration satisfies the acceptance condition $\Acc$, that is if there is a function $\counting \in \Acc$ such that for all $\type \in \Types$, 
\[
\begin{cases}
  |\{\tok \in \Tok \mid \play(\tok) = \type\}| = \counting(\type)&\text{ if }\counting(\type) < k\,,\text{ and}\\
  |\{\tok \in \Tok \mid \play(\tok) = \type\}| \geq k&\text{ if }\counting(\type) = k+\,.
\end{cases}
\]

A \emph{strategy} for \Sys is a function \strat that given a play returns a \Sys move.
A play is \emph{compatible} with \strat if all \Sys moves in that play are those given by \strat.
A strategy for \Sys is winning if all maximal plays compatible with it are winning.
Finally, we say that a pair $(\nS,\nE) \in \N^2$ is winning for \Sys if \Sys has a winning strategy in the token game (for any choice of token sets \TokS and \TokE of corresponding sizes).

\begin{lemmarep}\label{lemma:token game}
  A pair $(\nS,\nE)$ is winning for \Sys in the token game for \formule if and only if \Sys has a $(\nS,\nE)$-winning strategy for \formule in the standard synthesis game.
\end{lemmarep}

\begin{appendixproof}
Winning in the token game on types is the same as winning in the token game on words: a strategy in the word arena moving a token from word $w$ to $w \cdot w'$ is mimicked in the type arena by the strategy moving the same token from $\tpword[w]$ to $\tpword[w \cdot w']$ (by definition of \preordtp, this is a valid move).
Conversely, if a strategy in the type arena wants to move a token from $\type$ to $\typebis$, then by Lemma~\ref{lem:preordtp_forall} we know that whatever word $w$ such that $\tpword[w] = \type$ the token is in at the time, there is an extension $w'$ such that $\tpword[ww'] = \typebis$ that can be played by the strategy in the arena of words.
Furthermore, the acceptance conditions are equivalent: by Lemma~\ref{lemma:stationary type} the type of some token $\tok$ in the limit configuration of a play is exactly the type of the data word of the corresponding process.
Thus, the acceptance condition of the token game in the type arena is equivalent to winning in the word arena.
Finally, winning in the token game on words is equivalent to winning in the symmetric game, which itself is equivalent to winning in the standard synthesis game, thus \Sys wins the token game on types if and only if \Sys wins the standard synthesis game.
\ifcsname fullversion\endcsname
\else
\qed
\fi
\end{appendixproof}

For a fixed pair $(\nS,\nE)$, the token game is a finite, albeit very large, game.
Remember that our goal is to find whether there exists such a pair that is winning.
We show in the next section how to reduce the search to a (large but) finite space.

\section{Double cutoff for solving the synthesis problem}\label{sec:decidability}
Recall that in terms of expressive power, \prefFOdw is located somewhere between \FOeq (whose existential synthesis problem is known to be decidable~\cite{grange2023first}) and \FOeqord (for which is it undecidable already when restricting to two variables, i.e. for \FOtwoeqord~\cite{grange2023first}).

In this section, we make a step towards closing the gap by proving our main result:

\begin{theorem}
  \label{th:main}
  The existential synthesis problem for \prefFOdw is decidable.
\end{theorem}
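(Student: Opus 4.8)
The reduction is already in place: by Lemma~\ref{lemma:token game}, the existential synthesis problem for a $\prefFOdw$ sentence $\formule$ amounts to deciding whether there is a pair $(\nS,\nE)\in\N^2$ for which \Sys wins the token game for $\formule$, and for a \emph{fixed} pair this token game is a finite (hence effectively solvable) game. So the only thing left to do is to confine the search for a winning pair to a computable finite set. The plan is to establish two cutoffs --- one bounding $\nE$, one bounding $\nS$ --- this being the \emph{double cutoff}.

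For the \emph{System cutoff}, I would exhibit a computable integer $\cut$, roughly the product of the threshold $k$ (the quantifier depth of $\formule$) with the number of depth-$k$ types --- more precisely a function of $k$ and of the height $\heightmax$ of the tree of types --- such that for every $\nE$ and every $\nS\geq\cut$, if \Sys wins the $(\nS+1,\nE)$-game then \Sys wins the $(\nS,\nE)$-game. The construction turns a winning strategy $\oldstrat$ for the larger game into a winning strategy $\newstrat$ for the smaller one: while playing the $(\nS,\nE)$-game, \Sys additionally maintains a \emph{ghost} token, so that its $\nS$ real tokens together with the ghost mimic the $\nS+1$ tokens of a play following $\oldstrat$ (forwarding each \Envi move verbatim, and choosing the ghost's moves itself). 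Since $\oldstrat$ is winning, the $k$-bounded collection of types of the simulated $(\nS+1,\nE)$-play belongs to $\Acc$; it then suffices to make the ghost end --- in the limit --- on a type already populated by at least $k$ real tokens, because by Lemma~\ref{lemma:threshold} deleting the ghost leaves the $k$-bounded collection (and hence the $\Acc$-membership) unchanged, so the actual $(\nS,\nE)$-play is winning too. That such a $\geq k$-populated type always exists along every play follows by pigeonhole once $\nS\geq\cut$, since tokens only move forward along the partial order $\preordtp$ on the finite tree of types (Lemma~\ref{lemma: preordtp order}).

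For the \emph{Environment cutoff}, a symmetric argument comparing the $(\nS,\nE+1)$-game with the $(\nS,\nE)$-game provides a computable $\minind$ such that, for every $\nS$ and every $\nE\geq\minind$, \Sys wins the $(\nS,\nE)$-game if and only if \Sys wins the $(\nS,\minind)$-game; the ghost now stands for a surplus \Envi token, whose moves \Sys simulates internally and again steers onto a $\geq k$-populated type, so that Lemma~\ref{lemma:threshold} makes its presence immaterial for $\Acc$. Combining the two cutoffs, \Sys wins the token game for $\formule$ for some pair $(\nS,\nE)$ if and only if it does so for some pair with $\nS\leq\cut$ and $\nE\leq\minind$. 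The decision procedure therefore enumerates these finitely many pairs, solves the finite token game for each, and answers positively exactly when one of them is won by \Sys; by Lemma~\ref{lemma:token game}, this decides the existential synthesis problem for $\prefFOdw$.

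The two cutoff lemmas --- the System cutoff above all --- are where I expect the difficulty to lie. The crux is that the acceptance condition $\Acc$ is \emph{not} monotone in the number of tokens of either player (a depth-$k$ formula can fix the number of classes of a given type to an exact value below $k$), so one cannot just add or remove tokens: the ghost simulation works only because it preserves the $k$-bounded collection exactly. Performing it against an adversarial opponent forces a careful, dynamic re-matching between the real tokens of the smaller game and the virtual tokens of the larger one, governed by a family of invariants relating the two plays (the token correspondence, the position and level of saturation of the ghost, and the equality of the two $k$-bounded collections); it is the round-by-round preservation of these invariants that ultimately pins down the values of $\cut$ and $\minind$.
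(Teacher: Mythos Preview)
Your overall plan matches the paper's: reduce to the token game via Lemma~\ref{lemma:token game}, then establish two cutoffs bounding the search space. Your description of the \Envi cutoff is essentially Lemma~\ref{lem:env_cutoff}. But the \Sys cutoff as you state it is false, and this is not a detail --- it is exactly where the difficulty of the proof lies.

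You claim a bound $\cut$ depending only on $k$ (via the height of the type tree) such that for \emph{every} $\nE$ and every $\nS\geq\cut$, a winning $(\nS+1,\nE)$-strategy yields a winning $(\nS,\nE)$-strategy. The paper gives a concrete counterexample: for the formula \exformule of Example~\ref{ex:main_ex}, \Sys loses the $(\nS,\nE)$-game whenever $\nS<\nE$ (see the discussion and Figure~\ref{fig:ex_env_win} following Lemma~\ref{lem:sys_cutoff}), while \Sys wins $(\nE,\nE)$ by mirroring. Applying your cutoff repeatedly with any $\nE>\cut$ would then conclude that $(\cut,\nE)$ is winning, which it is not. The \Sys bound therefore \emph{must} depend on $\nE$ --- it is \fS in the paper --- and this forces the order of the two cutoffs: one first bounds $\nE$ by \fE (which \emph{is} uniform in $\nS$, Lemma~\ref{lem:env_cutoff}), and only then bounds $\nS$ by \fS for each $\nE\leq\fE$ (Lemma~\ref{lem:sys_cutoff}). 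Your proposal inverts this order and cannot be repaired without reintroducing the dependence on $\nE$.

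The source of the asymmetry is the point you half-identify in your last paragraph but do not carry into your second. In the \Envi cutoff, \Sys is simulating an \emph{extra \Envi token} and therefore genuinely chooses its moves, so a straightforward pigeonhole along the type tree lets \Sys steer the ghost to a large type. In the \Sys cutoff the ghost is a \Sys token whose moves are \emph{prescribed by} $\oldstrat$; your phrase ``choosing the ghost's moves itself'' is wrong here. \Sys's only freedom is to dynamically re-identify \emph{which} of the $\nS+1$ simulated tokens plays the role of the ghost. The paper's proof enforces a ``max-descending'' discipline on this re-identification and maintains several invariants, and a factor $|\Types|^{\nE}$ appears in \fS because the argument must pigeonhole over all \Envi configurations that $\oldstrat$ may react to between two changes of ghost identity --- which is precisely why the \Sys cutoff cannot be uniform in $\nE$.
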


To prove Theorem~\ref{th:main} we follow a double cutoff strategy. We first show in Section~\ref{sec:env cutoff} that there is no point in considering too many tokens for \Envi, where the bound depends on the quantifier depth $k$ of the formula \formule but, importantly, not on the number of \Sys tokens. As a second step, we prove in Section~\ref{sec:sys_cutoff} that given a fixed number of \Envi tokens (which is a reasonable assumption in view of the previous point), one can restrict one's study to a space where the number of \Sys tokens is bounded by a function of $k$ and the number of \Envi tokens. The reasoning is detailed in Section~\ref{sec:end_proof}.

\subsection{Having more tokens makes things easier for \Envi}\label{sec:env cutoff}

First, we prove that beyond some threshold, if \Envi can win with some number of tokens, then they can \emph{a fortiori} win with a larger number of tokens. Let us stress that this is not true when the number of tokens is small, as witnessed by the formula \formule stating the existence of at least two \Envi tokens: in that case, \Sys can benefit from \Envi having more tokens.

\begin{lemma}
  \label{lem:env_cutoff}
  For every $k\in\N$, there exists some $\fE$ such that for any $\nS\in\N$, any $\nE\geq\fE$, and any \prefFOdw-sentence \formule of depth $k$, if $(\nS,\nE+1)$ is winning for \Sys then $(\nS,\nE)$ is winning for \Sys.
\end{lemma}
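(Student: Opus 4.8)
The plan is to argue entirely inside the token game on types, invoking Lemma~\ref{lemma:token game}, and to transfer a winning strategy $\strat[+]$ for \Sys in the $(\nS,\nE+1)$-game into a winning strategy for \Sys in the $(\nS,\nE)$-game. I set $\fE := k\cdot|\Types|$, where $|\Types|$ is the (finite) number of \prefFO-types of depth $k$ over the fixed alphabet; note this depends only on $k$, not on $\nS$. \Sys will play the real $(\nS,\nE)$-game while privately maintaining a simulated $(\nS,\nE+1)$-game whose Environment tokens are the $\nE$ real ones together with one extra \emph{phantom} token $t_0$; \Sys feeds the simulated play to $\strat[+]$ and replays the \Sys moves it returns (these are directly usable since \Sys owns the same token set \TokS in both games, and the real Environment's moves are copied verbatim into the simulation). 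Since \emph{any} monotone trajectory \Sys picks for $t_0$ makes the simulated play a legitimate play of the $(\nS,\nE+1)$-game, and $\strat[+]$ is winning, the simulated limit configuration is always in $\Acc$. By Lemma~\ref{lemma:threshold} and the definition of $\Acc$, forgetting $t_0$ preserves acceptance as soon as $t_0$ ends up on a type that already carries at least $k$ real Environment tokens; so the whole task reduces to piloting the phantom into such a ``crowded'' type.

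The crux is to do this online, while the real Environment — and hence, via $\strat[+]$, \Sys itself — may react adversarially. Here I exploit that $(\Types,\preordtp)$ is a finite tree rooted in $\preffotpword[\motvide]$. For a type $q$ let $D(q)$ be the subtree of types $\succeq q$; since tokens only move upward, the number $n(q)$ of real Environment tokens currently lying in $D(q)$ is non-decreasing along any play, and $n(\text{root})=\nE$ throughout. \Sys maintains a pointer $q$, initially the root, keeps $t_0$ always on $q$, and enforces the invariant $n(q)\geq k\cdot|D(q)|$ (valid initially since $\nE\geq\fE = k|\Types| \geq k\cdot|D(\text{root})|$). At each of its turns \Sys inspects the current configuration: if at least $k$ real Environment tokens sit on $q$, it does nothing; otherwise fewer than $k$ tokens are on $q$, so writing $\sum_c$ over the children $c$ of $q$ and using $|D(q)| = 1 + \sum_c |D(c)|$ together with the invariant, one gets $\sum_c n(c) > k\sum_c |D(c)|$, hence some child $c^*$ has $n(c^*) > k\cdot|D(c^*)|$; \Sys then moves the pointer and $t_0$ to $c^*$ (a valid upward move), which re-establishes the invariant. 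A pointer move strictly increases depth in the finite tree, so the pointer stabilizes at some $q_\infty$; after that, every System turn finds at least $k$ real Environment tokens on $q_\infty$, and since all tokens stabilize after finitely many rounds, the limit configuration has at least $k$ real Environment tokens on $q_\infty$, which is exactly the final position of $t_0$.

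Assembling the pieces: the simulated $(\nS,\nE+1)$-play is $\strat[+]$-compatible, hence winning, so its limit configuration $\conf$ is in $\Acc$; the real limit configuration is $\conf$ with $t_0$ deleted from $q_\infty$, which still leaves at least $k$ real tokens there, so the two configurations yield the same type collection (the count on $q_\infty$ stays ``$\geq k$'', all other counts unchanged) and are therefore accepted together. Hence the real limit configuration is winning, which by Lemma~\ref{lemma:token game} gives \Sys a $(\nS,\nE)$-winning strategy for \formule, as required.

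I expect the main obstacle to be exactly the circularity navigated above: the phantom's moves must be committed turn by turn, yet ``crowdedness'' of a type is a property of the limit configuration, which depends on the whole play and thus on the phantom's trajectory. The monotone potential $n(\cdot)$ on subtrees is what breaks this loop, letting \Sys commit to a strictly descending chain of target subtrees each guaranteed to remain populated enough to force the phantom onto a genuinely crowded type regardless of the opponent. Everything else we rely on — that $\Acc$ is a threshold-counting condition (Lemma~\ref{lemma:threshold}), that token moves are monotone, and that the type graph is a finite tree — has already been established.
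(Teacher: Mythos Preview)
Your proof is correct and follows the same architecture as the paper's: work in the token game, simulate the $(\nS,\nE+1)$-game with one phantom Environment token, copy \Sys moves from the given winning strategy, and pilot the phantom along a descending chain in the type tree so that it always shares its node with at least $k$ real Environment tokens, whence deleting it cannot affect acceptance.

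The one genuine difference is the invariant you use to steer the phantom. The paper declares a type $\tau$ \emph{huge} when it individually holds at least $k\cdot|\Types|^{h(\tau)}$ Environment tokens, and argues (via pigeonhole on the at most $|\Types|$ descendants) that once the current huge type drops below $k$ tokens some strictly lower descendant must be huge; this yields the bound $\fE=k\cdot|\Types|^{\heightmax}$. You instead track the monotone subtree count $n(q)$ and maintain $n(q)\geq k\cdot|D(q)|$; because $n(q)$ is non-decreasing and $|D(q)|=1+\sum_c|D(c)|$, a simple averaging over the children suffices to descend, and you get the tighter bound $\fE=k\cdot|\Types|$. Your potential is cleaner (it is preserved automatically between pointer moves, whereas the paper's ``huge'' status of a single node need not persist and one implicitly falls back on the subtree count anyway) and yields an exponentially smaller cutoff in $\heightmax$. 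Conversely, the paper's $\largetype$ function jumps greedily to a minimal-height huge descendant rather than only to a child, but this extra freedom buys nothing for the argument.
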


\begin{proof}
Let $k \in \N$ and let \formule be a \prefFOdw-sentence of depth $k$.
Let us fix three token sets $\TokS^{\nS}$, $\TokE^{\nE}$, and $\TokE^{\nE+1}$ of sizes \nS, \nE, \nE+1 respectively. We note $\Tok = \TokS^{\nS} \uplus \TokE^{\nE}$ and $\Tok_+ = \TokS^{\nS} \uplus \TokE^{\nE+1}$.
Let $\game = (\Tok, \arena_k, \Acc)$ be the token game over \Tok for \formule and $\game_+ = (\Tok_+, \arena_k, \Acc)$ the same over $\Tok_+$.
We show how to build a winning strategy for \Sys in \game from a winning strategy in $\game_+$.
But first, let us define some useful properties.

For all types $\type \in \Types$ we define the height of \type, denoted by \height, as its height in the tree of types, e.g. $\height[\type] = 0$ for any leaf in the tree.
Let \heightmax denote the height of the root $\type_0 = \tpword[\motvide]$.

In any given configuration obtained by following a play in \game, we say that a type \type is \emph{large} in that configuration if there are at least $k$ \Envi tokens in \type. 
Intuitively, this means that the acceptance condition $\Acc$ cannot differentiate between a configuration with a large type \type and the same configuration with even more tokens in \type.
Therefore, if we can ensure that \Sys has a strategy in \game that simulates the $\game_+$ winning strategy while always keeping the missing \Envi token in a large type, then that strategy would also be winning as $\Acc$ (which is the same in both \game and $\game_+$) has no way of distinguishing them.

To that end, we define what it means to have a \emph{huge} number of \Envi tokens in one type \type.
This is given by a lower bound $\boundtp = k \cdot |\Types|^{\height[\type]}$ that depends only on $k$ and the height of \type.
It guarantees the following properties:
\begin{enumerate}
\item A type that is huge is \textit{a fortiori} large.
\item If \type is a huge type with $\height[\type] = 0$, it contains at least $\boundtp = k$ \Envi tokens.
And since it is a leaf, all those tokens will stay in this type forever.
Thus \type will remain large from this point on.
\item If \type is a huge type with height greater than 0, after any \Envi move either \type still remains large, or by the pigeonhole principle there exists another type \typebis whose height is strictly lower than \height[\type], that can be reached from \type (i.e. such that \type \preordtp \typebis), and such that the number of \Envi tokens in \typebis is greater than \boundtp[\typebis], ensuring \typebis is also huge.
\end{enumerate}
We then define $\fE = \boundtp[\type_0] = k \cdot |\Types|^{\heightmax}$. Note that it depends only on $k$.

By these definitions, and since we assume that $\nE \geq \fE$, $\type_0$ is huge in the initial configuration because all \Envi tokens start in type $\type_0$.
By the third property, this means that after any move by \Envi, either $\type_0$ is still large, or there is (at least) one huge type $\type_1$ reachable from $\type_0$.
This can be repeated until either a type of height 0 is reached, which will be large forever according to the second property, or we stay in the same large type forever.
Formally, we define inductively a function $\largetype: \Plays \to \Types$ (\largetype for ``large type'') such that 
$\largetype(\conf_0) = \type_0$, 
$\largetype(\play \xrightarrow{\moveS} \conf) = \largetype(\play)$, and 
$\largetype(\play \xrightarrow{\moveE} \conf) = \type$ where \type is either a minimal (in terms of height) type such that $\largetype(\play) \preordtp \type$ and \type is huge in \conf if such a type exists, or $\type = \largetype(\play)$ otherwise. 
This is well-defined due to the above-mentioned third property, and we easily obtain that $\largetype(\play)$ is large in the last configuration of \play for any play \play.

Now assume \stratnEplus is a winning strategy for \Sys in $\game_+$.
We define a strategy \stratnE for \Sys in \game using $\largetype$ and additionally maintaining a play \playplus of $\game_+$ with the following invariant:
for all plays \play of \game that are \stratnE-compatible, \playplus is a \stratnEplus-compatible play such that the configuration reached after \playplus has the same number of tokens in each type as the configuration reached after \play, plus one extra \Envi token in $\largetype(\play)$.
The strategy \stratnE is simply defined as $\stratnE(\play) = \stratnEplus(\playplus)$, i.e. it mimics the actions of \stratnEplus on play \playplus.
Assuming that \playplus is properly defined and that the previously mentioned invariant holds, it is then easy to prove that \stratnE is winning.
Indeed, for every configuration that can be reached from a \stratnE-compatible play \play, there is an almost similar configuration that can be reached by following \playplus, which is \stratnEplus-compatible, with the only difference being one extra \Envi token in the type designated by \largetype.
Since this type is large by definition of \largetype, $\Acc$ cannot distinguish between the two configurations.
Thus, for any maximal play \play compatible with \stratnE, its limit configuration is indistinguishable from the limit configuration of \playplus, which satisfies \Acc by assumption of \stratnEplus being winning.
This proves that \stratnE is indeed a winning strategy for \Sys in \game.

It only remains to explain how \playplus is defined and show it satisfies the invariant.
Without loss of generality, assume that $\TokE^{\nE+1} = \TokE^{\nE} \uplus \{\ghost[]\}$.
We strengthen the second part of the invariant so that the configuration reached after \playplus is such that every token in $\TokE^{\nE}$ is in the same type as in the configuration reached after \play, and with \ghost[] being the extra token in type $\largetype(\play)$.
Initially this play \playplus is the empty play $\conf_0$ of $\game_+$, which trivially satisfies both conditions.
Suppose now that \play is a \stratnE-compatible play and \playplus is a \stratnEplus-compatible play that also satisfies the (strengthened) second part of the invariant.
\begin{itemize}
\item On any \Sys move $\moveS$ in \game leading to play $\play \xrightarrow{\moveS} \conf$, we simply update \playplus to $\playplus \xrightarrow{\moveS} \conf_+$ where $\conf_+$ is the result of applying $\moveS$ to the last configuration of $\playplus$.
Since \stratnE mimics \stratnEplus, if $\play \xrightarrow{\moveS} \conf$ is \stratnE-compatible, then $\playplus \xrightarrow{\moveS} \conf_+$ is \stratnEplus-compatible.
Moreover, by induction hypothesis, the last configuration of $\playplus$ is the last configuration of $\play$ with the extra token $\ghost[]$ in $\largetype(\play)$.
By applying the same \Sys move $\moveS$ to both, we easily obtain that $\conf_+$ is the same as $\conf$ plus $\ghost[]$ in $\largetype(\play \xrightarrow{\moveS} \conf) = \largetype(\play)$.
\item On an \Envi move $\moveE$ in \game resulting in play $\play \xrightarrow{\moveE} \conf$, let $\type = \largetype(\play)$ and $\typebis = \largetype(\play \xrightarrow{\moveE} \conf)$.
Let $\moveE^{\ghost[]}$ be the \Envi move that only affects \ghost[] by moving it from \type to \typebis (the move can be empty if \type = \typebis).
We know such a move is possible because by definition of \largetype we have that $\type \preordtp \typebis$.
Then with $\moveE^+$ being the \Envi move combining $\moveE$ and $\moveE^{\ghost[]}$, we update \playplus to $\playplus \xrightarrow{\moveE^+} \conf_+$.
It is trivially still \stratnEplus-compatible since no \Sys move has been made.
The extra token \ghost[] was moved to the new large type given by \largetype, and all other \Envi tokens made the same moves as in $\play \xrightarrow{\moveE} \conf$, so the invariant still holds.
\end{itemize}
Thus \playplus is properly defined and always satisfy the required invariant, which concludes the proof.
\ifcsname fullversion\endcsname
\else
\qed 
\fi
\end{proof}

\subsection{Too many tokens are useless to \Sys}\label{sec:sys_cutoff}

Dually, 
\ifcsname fullversion\endcsname
we prove in the appendix
\else
one can prove
\fi
the following lemma, stating that only so many tokens can help the \Sys win; beyond that point, no amount of additional tokens can turn the table and change a losing game into a winning one.

\begin{lemmarep}
  \label{lem:sys_cutoff}
  For every $k,\nE\in\N$, there exists some $\fS$ such that for any $\nS\geq\fS$, if \Sys has an $(\nS+1,\nE)$-winning strategy in a token game of depth $k$, then \Sys has an $(\nS,\nE)$-winning strategy in that game.
\end{lemmarep}

Note that contrary to Lemma~\ref{lem:env_cutoff}, where the bound depends only on $k$, here \fS depends both on $k$ and the number of tokens of \Envi. This cannot be avoided, as showcased by the following example.

\begin{example}
  Remember formula \exformule from Example~\ref{ex:main_ex}. 
  We can show that the $(\nS,\nE)$-game for \exformule cannot be won by \Sys when $\nS<\nE$, as exemplified in Figure~\ref{fig:ex_env_win}. Note that we use for simplicity's sake the representation of \Types[2] from Figure~\ref{fig:types} when we should consider \Types[3], as \exformule has depth $3$ -- this does not matter as \exformule cannot distinguish \preffotpword[\letterE\letterE][2] from \preffotpword[\letterE\letterE\letterE][2].

  Starting from the initial configuration depicted in Figure~\ref{fig:init}, \Envi can move one of their tokens from \preffotpword[\motvide][2] to \preffotpword[\letterE][2] (Figure~\ref{fig:firstE}), forcing \Sys to answer by moving on of their tokens from \preffotpword[\motvide][2] to \preffotpword[\letterS][2] (Figure~\ref{fig:firstS}) in order to satisfy the win condition of \exformule. \Sys could also directly move down other tokens from \preffotpword[\motvide][2] to \preffotpword[\letterS\letterS][2], but this would only make things worse.

  Then by moving the same token to \preffotpword[\letterE\letterE][2] as in Figure~\ref{fig:secondE}, \Envi would force \Sys to move as well their first token to \preffotpword[\letterS\letterS][2] (cf. Figure \ref{fig:secondS}). Repeating this sequence a total of \nS times on different tokens would end up ``using'' all of \Sys's tokens, which would all end up in \preffotpword[\letterS\letterS][2], as illustrated in Figure~\ref{fig:thirdS}. \Envi then only needs to move one last token to \preffotpword[\letterE][2] to falsify the win conditions for \exformule.
\end{example}

  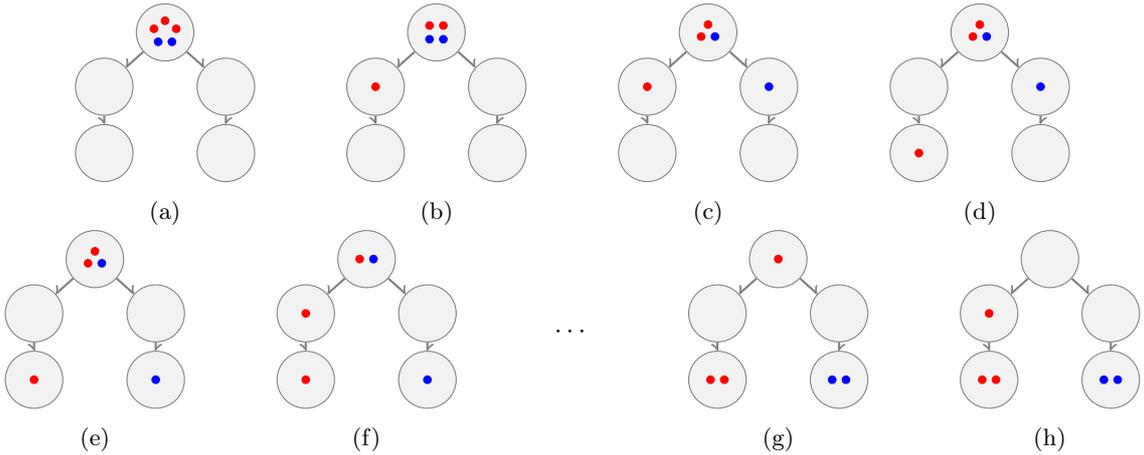
\begin{figure}[htb]
    \centering
    \captionsetup{justification=centering}
    \begin{subfigure}[b]{.2\textwidth}
      \centering
      \begin{tikzpicture}
        [scale=.8,
          every place/.style={draw=gray,fill=gray!10},
          every token/.style={draw=gray,fill=gray!10,minimum size=1mm}
        ]
        \node[place,
	  colored tokens={red,red,red,blue,blue}
        ] (vide) at (0,0) {};
        \node[place,
          colored tokens={}
        ] (aE) at (-1,-.9) {};
        \node[place,
          colored tokens={}
	] (aEaE) at (-1,-2) {};
        \node[place,
          colored tokens={}
	] (aS) at (1,-.9) {};
        \node[place,
	  colored tokens={}
        ] (aSaS) at (1,-2) {};
        \foreach \s/\t/\dir in {vide/aE/right to,aE/aEaE/right to,vide/aS/left to,aS/aSaS/left to} 
        \draw (\s) edge[gray,-\dir,thick] (\t);
      \end{tikzpicture}
      \caption{}
      \label{fig:init}
    \end{subfigure}
    \begin{subfigure}[b]{.2\textwidth}
      \centering
      \begin{tikzpicture}
        [scale=.8,
          every place/.style={draw=gray,fill=gray!10},
          every token/.style={draw=gray,fill=gray!10,minimum size=1mm}
        ]
        \node[place,
	  colored tokens={red,red,blue,blue}
        ] (vide) at (0,0) {};
        \node[place,
          colored tokens={red}
        ] (aE) at (-1,-.9) {};
        \node[place,
          colored tokens={}
	] (aEaE) at (-1,-2) {};
        \node[place,
          colored tokens={}
	] (aS) at (1,-.9) {};
        \node[place,
	  colored tokens={}
        ] (aSaS) at (1,-2) {};
        \foreach \s/\t/\dir in {vide/aE/right to,aE/aEaE/right to,vide/aS/left to,aS/aSaS/left to} 
        \draw (\s) edge[gray,-\dir,thick] (\t);
      \end{tikzpicture}
      \caption{}
      \label{fig:firstE}
    \end{subfigure}
    \begin{subfigure}[b]{.2\textwidth}
      \centering
      \begin{tikzpicture}
        [scale=.8,
          every place/.style={draw=gray,fill=gray!10},
          every token/.style={draw=gray,fill=gray!10,minimum size=1mm}
        ]
        \node[place,
	  colored tokens={red,red,blue}
        ] (vide) at (0,0) {};
        \node[place,
          colored tokens={red}
        ] (aE) at (-1,-.9) {};
        \node[place,
          colored tokens={}
	] (aEaE) at (-1,-2) {};
        \node[place,
          colored tokens={blue}
	] (aS) at (1,-.9) {};
        \node[place,
	  colored tokens={}
        ] (aSaS) at (1,-2) {};
        \foreach \s/\t/\dir in {vide/aE/right to,aE/aEaE/right to,vide/aS/left to,aS/aSaS/left to} 
        \draw (\s) edge[gray,-\dir,thick] (\t);
      \end{tikzpicture}
      \caption{}
      \label{fig:firstS}
    \end{subfigure}
    \begin{subfigure}[b]{.2\textwidth}
      \centering
      \begin{tikzpicture}
        [scale=.8,
          every place/.style={draw=gray,fill=gray!10},
          every token/.style={draw=gray,fill=gray!10,minimum size=1mm}
        ]
        \node[place,
	  colored tokens={red,red,blue}
        ] (vide) at (0,0) {};
        \node[place,
          colored tokens={}
        ] (aE) at (-1,-.9) {};
        \node[place,
          colored tokens={red}
	] (aEaE) at (-1,-2) {};
        \node[place,
          colored tokens={blue}
	] (aS) at (1,-.9) {};
        \node[place,
	  colored tokens={}
        ] (aSaS) at (1,-2) {};
        \foreach \s/\t/\dir in {vide/aE/right to,aE/aEaE/right to,vide/aS/left to,aS/aSaS/left to} 
        \draw (\s) edge[gray,-\dir,thick] (\t);
      \end{tikzpicture}
      \caption{}
      \label{fig:secondE}
    \end{subfigure}
    \begin{subfigure}[b]{.2\textwidth}
      \centering
      \begin{tikzpicture}
        [scale=.8,
          every place/.style={draw=gray,fill=gray!10},
          every token/.style={draw=gray,fill=gray!10,minimum size=1mm}
        ]
        \node[place,
	  colored tokens={red,red,blue}
        ] (vide) at (0,0) {};
        \node[place,
          colored tokens={}
        ] (aE) at (-1,-.9) {};
        \node[place,
          colored tokens={red}
	] (aEaE) at (-1,-2) {};
        \node[place,
          colored tokens={}
	] (aS) at (1,-.9) {};
        \node[place,
	  colored tokens={blue}
        ] (aSaS) at (1,-2) {};
        \foreach \s/\t/\dir in {vide/aE/right to,aE/aEaE/right to,vide/aS/left to,aS/aSaS/left to} 
        \draw (\s) edge[gray,-\dir,thick] (\t);
      \end{tikzpicture}
      \caption{}
      \label{fig:secondS}
    \end{subfigure}
    \begin{subfigure}[b]{.2\textwidth}
      \centering
      \begin{tikzpicture}
        [scale=.8,
          every place/.style={draw=gray,fill=gray!10},
          every token/.style={draw=gray,fill=gray!10,minimum size=1mm}
        ]
        \node[place,
	  colored tokens={red,blue}
        ] (vide) at (0,0) {};
        \node[place,
          colored tokens={red}
        ] (aE) at (-1,-.9) {};
        \node[place,
          colored tokens={red}
	] (aEaE) at (-1,-2) {};
        \node[place,
          colored tokens={}
	] (aS) at (1,-.9) {};
        \node[place,
	  colored tokens={blue}
        ] (aSaS) at (1,-2) {};
        \foreach \s/\t/\dir in {vide/aE/right to,aE/aEaE/right to,vide/aS/left to,aS/aSaS/left to} 
        \draw (\s) edge[gray,-\dir,thick] (\t);
      \end{tikzpicture}
      \caption{}
      \label{fig:thirdE}
    \end{subfigure}
    \begin{subfigure}[t]{.1\textwidth}
      \centering
      \begin{tikzpicture}
        \node at (0,0) {};
        \node at (0,1.5) {$\cdots$};
      \end{tikzpicture}
    \end{subfigure}
    \begin{subfigure}[b]{.2\textwidth}
      \centering
      \begin{tikzpicture}
        [scale=.8,
          every place/.style={draw=gray,fill=gray!10},
          every token/.style={draw=gray,fill=gray!10,minimum size=1mm}
        ]
        \node[place,
	  colored tokens={red}
        ] (vide) at (0,0) {};
        \node[place,
          colored tokens={}
        ] (aE) at (-1,-.9) {};
        \node[place,
          colored tokens={red,red}
	] (aEaE) at (-1,-2) {};
        \node[place,
          colored tokens={}
	] (aS) at (1,-.9) {};
        \node[place,
	  colored tokens={blue,blue}
        ] (aSaS) at (1,-2) {};
        \foreach \s/\t/\dir in {vide/aE/right to,aE/aEaE/right to,vide/aS/left to,aS/aSaS/left to} 
        \draw (\s) edge[gray,-\dir,thick] (\t);
      \end{tikzpicture}
      \caption{}
      \label{fig:thirdS}
    \end{subfigure}
    \begin{subfigure}[b]{.2\textwidth}
      \centering
      \begin{tikzpicture}
        [scale=.8,
          every place/.style={draw=gray,fill=gray!10},
          every token/.style={draw=gray,fill=gray!10,minimum size=1mm}
        ]
        \node[place,
	  colored tokens={}
        ] (vide) at (0,0) {};
        \node[place,
          colored tokens={red}
        ] (aE) at (-1,-.9) {};
        \node[place,
          colored tokens={red,red}
	] (aEaE) at (-1,-2) {};
        \node[place,
          colored tokens={}
	] (aS) at (1,-.9) {};
        \node[place,
	  colored tokens={blue,blue}
        ] (aSaS) at (1,-2) {};
        \foreach \s/\t/\dir in {vide/aE/right to,aE/aEaE/right to,vide/aS/left to,aS/aSaS/left to} 
        \draw (\s) edge[gray,-\dir,thick] (\t);
      \end{tikzpicture}
      \caption{}
      \label{fig:fourthE}
    \end{subfigure}
    \caption{\Sys may need at least as many tokens as \Envi to win.}
    \label{fig:ex_env_win}
  \end{figure}

  Although the insight gained in the proof of Lemma~\ref{lem:env_cutoff} is useful when considering the proof of Lemma~\ref{lem:sys_cutoff}, the latter is much more involved. Let us nevertheless give the key ideas of the proof. This time, the goal is to convert an $(\nS+1,\nE)$-winning strategy \oldstrat for \Sys in a token game of depth $k$ to an $(\nS,\nE)$-winning strategy \newstrat for the same game.
  
  In order to adapt \oldstrat to the situation where \Sys has one less token, we will track a \emph{ghost} token. The central idea of the proof is to guarantee at all time that the ghost shares its type with many other tokens, so that its presence or absence is of no import to the winning conditions of the game.
  Up to that point, the proof is very similar to that of Lemma~\ref{lem:env_cutoff}. The main difference in this case is that the ghost is not a fixed token: its identity among the \nS+1 \Sys tokens may vary depending on the way the play unfolds.
  But once again, if one starts with enough \Sys tokens and is careful in the tracking of this ghost token, it is possible to ``hide it'' among many others throughout the entire play.

\begin{toappendix}
  \ifcsname fullversion\endcsname
  \subsection{Idea of the proof and definitions.}
  \else
  \subsubsection{Idea of the proof and definitions.}
  \fi
Let us start by giving a value to \fS. For that, we define the following function $F$ by induction on $n$:
\begin{equation}
  \label{eq:F}
  \begin{cases}
    F(-1):=k\\
    F(n):= |\Types|^{\nE+2}\cdot\big(F(n-1)+1\big)^2+k &\text{ for }n\geq 0
  \end{cases}
\end{equation}
Let us note already that $F$ is non-decreasing, and that $F(n)\geq k$ for all $n > 0$. 
Let $\height[\type]$ denote the height of type \type in $(\Types,\preordtp)$, as defined in Section~\ref{sec:env cutoff}.
In a given configuration, we say that a type \type is \emph{large} if it has more than $k$ \Sys tokens and \emph{huge} if it has more than $F(\height[\type])$ \Sys tokens.
We then set
\begin{equation}
  \label{eq:fS}
  \fS:=\bound[\height[\preffotpword[\motvide]]]
\end{equation}
The relevance of this choice of $\fS$ will become apparent throughout the proof.

In order to prove Lemma~\ref{lem:sys_cutoff}, we explain how an $(\nS+1,\nE)$-winning strategy \oldstrat for \Sys in a token game of depth $k$ can be converted to an $(\nS,\nE)$-winning strategy \newstrat for the same game.

Let \oldTokS (resp. \newTokS) be the set of \Sys's tokens in the original $(\nS+1,\nE)$ (resp. new $(\nS,\nE)$) game, and $\TokE$ be the set of \Envi's tokens in both games. Note that $|\oldTokS|=\nS+1$, $|\newTokS|=\nS$ and $|\TokE|=\nE$.
Let $\gameplus = (\oldTokS \uplus \TokE, \arena_k, \Acc)$ and $\game = (\newTokS \uplus \TokE, \arena_k, \Acc)$ denote the corresponding games.

In order to adapt \oldstrat to a setting where \Sys has one fewer token, we will need to track a \emph{ghost} token \ghost; i.e. a \Sys token which appears in \oldstrat but not in \newstrat. The key idea is to make sure that at any point during the unfolding of \oldstrat, \ghost is always in a large type, so that the winning conditions of the game cannot notice its absence.
Contrary to the previous section, there is no hope for one token to be the ghost throughout the whole play: depending on the way \oldstrat unfolds, we may have to change candidates for being the ghost.

We will need additional information to decide when to switch to another token for our choice of ghost. First, we not only need to track the token that has disappeared when going from \oldstrat to \newstrat, but is it also crucial that we remember which token in \game plays the role of which token in \gameplus. 
We store this information in a one-to-one mapping $\map:\newTokS\to\oldTokS$, where \ghost is precisely the token from \oldTokS that has no preimage. 
Second, the strategy \newstrat we are defining mimics the winning strategy \oldstrat in \gameplus, and we need a way to remember how both connect. 
This information is stored via \playplus, such that each play \play in \game compatible with \newstrat has a corresponding play \playplus in \gameplus that is compatible with \oldstrat. 
Note that \playplus can be longer than \play, as we will sometimes need to move faster in \gameplus than in \game, for technical reasons.

It will be convenient to write \typeghost for $\playplus(\ghost)$, i.e. the type of the ghost after the play \playplus in \gameplus.

We work by induction on the length of the play \play, and define as we go along the strategy \newstrat on those \play which end with an \Envi move. If 
\begin{equation}
  \label{eq:play}
  \play=\conf_0 \xrightarrow{\moveE^0} \conf_1 \xrightarrow{\moveS^1} \conf_2 \xrightarrow{\moveE^2}\cdots\xrightarrow{m_?^{n-1}} \conf_{n}
\end{equation}
(where the last move is an \Envi move if and only if $n$ is odd), we denote \partplay the prefix $\play=\conf_0 \xrightarrow{\moveE^0} \conf_1 \xrightarrow{\moveS^1} \cdots \xrightarrow{m_{?}^{i-1}}\conf_{i}$ of \play.

Let us consider a play \play as in (\ref{eq:play}) compatible with (what has been defined so far of) \newstrat, which satisfies the invariants defined below. Then 
\begin{itemize}
\item if $n=2l$ is even, we show that for any valid \Envi move \nextMoveE compatible with \play and with $\playbis = \play \xrightarrow{\nextMoveE} \conf_{n+1}$, 
we can define \playbisplus and \map[\playbis] such that the invariants also hold for $\playbis$
\item if $n=2l+1$ is odd, we define the next \Sys move $\newstrat(\play)$ of our new strategy, and with $\playbis = \play \xrightarrow{\newstrat(\play)} \conf_{n+1}$ we also define \playbisplus and \map[\playbis] such that the invariants also hold for $\playbis$.
\end{itemize}
To conclude the proof of Lemma~\ref{lem:sys_cutoff} we show that the new strategy \newstrat is indeed winning in \game.

One of the invariant states that the type of the ghost should go down as much as possible when choosing the next \Sys move.
To that end, for any play \play ending in an \Envi move, we define a \emph{$\type$-descending sequence} for \play as a pair $(\perm,\seqmoves)$ such that \perm is a permutation of \oldTokS, \seqmoves is a finite sequence of moves in \gameplus starting from \playplus that starts and end with a \Sys move, that satisfies the following conditions:
\begin{align}
\label{eq:type}
\begin{cases}
  \playbisplus = \playplus \xrightarrow{\seqmoves} \conf'_+ \text{ is compatible with }\oldstrat\\
  \forall\pe\in\TokE,\ \playplus(\pe) = \playbisplus(\pe)\\
  \forall\ps\in\oldTokS,\ \playplus(\ps) \preordtp \playbisplus(\perm(\ps))\\
  \playbisplus(\perm(\ghost[\play]))=\type\\ 
  |\{\ps\in\oldTokS:\playbisplus(\ps)=\type\}|\geq\bound+1
\end{cases}
\end{align}
A \Sys move $\moveS$ is said to be \emph{max-descending} if it results in a play $\playbis = \play \xrightarrow{\moveS} \conf'$ such that there are no $\type$-descending sequence for \play for any type $\type \neq \typeghost[\playbis]$ such that $\typeghost[\playbis] \preordtp \type$.

  \ifcsname fullversion\endcsname
  \subsection{Statement of the invariants.}
  \else
  \subsubsection{Statement of the invariants.}
  \fi

We are now ready to list the various invariants, which are indexed with the play to which they refer. We give below an intuition about their meaning and usefulness.

\begin{tabular}{rl}
  \invnu:&\playplus is compatible with \oldstrat and \playplus has the same length parity as \play\\
  &furthermore, for every prefix \playbis of \play, \playbisplus is a prefix of \playplus\\
  \invetp:&$\forall \pe\in\TokE,\ \play(\pe) = \playplus(\pe)$ \\
  \invstp:&$\forall \ps\in\newTokS,\ \play(\ps)=\playplus(\map(\ps))$\\
  \invghosttp:& If \playbis is a prefix of \play, then $\typeghost[\playbis]\preordtp \typeghost$\\
  \invbound:&If there is no strict prefix \playbis of \play such that $\typeghost[\playbis]=\typeghost$, \\
  &then \typeghost is huge in \play; otherwise \typeghost is still large in \play.\\
  \invmax:&All \Sys moves are max-descending.\\
\end{tabular}

Let us briefly give some intuition about these invariants. First of all, \invnu, \invetp and \invstp ensure that \play mimics \playplus. In particular, then all tokens (if we identify $\map(\ps)$ with $\ps$) except the ghost have the same type after \play and \playplus. The invariant \invghosttp ensures that, although the ghost can change during the play, the type containing it can only go down in \Types. Another crucial requirement, if we are to show that the limit configuration of \play in \game is winning if and only if the limit configuration of \oldplay is winning in \gameplus, is that the ghost (which is an extra token in \gameplus) cannot make a difference for the acceptance condition of the game. This is ensured by \invbound, which has as consequence the fact that the ghost is always in a large type. The first part of \invbound is a technical requirement needed in the proofs, and ensures that whenever the type of the ghost moves down in \Types, its new type is huge. As for \invmax, it ensures that the \Sys brings the ghost down as much as possible: if a play by \Sys would have permitted the ghost to be lower in the tree $(\Types,\preordtp)$, then the \Sys would have played it. This is needed to guarantee \invbound.

  \ifcsname fullversion\endcsname
  \subsection{Proof by induction.}
  \else
  \subsubsection{Proof by induction.}
  \fi

We are now ready to move to the induction proper.
\paragraph*{Empty play.}

In order to start the induction, let us consider the empty play $\play_0$ in \game. We set $\map[\play_0]$ to be any one-to-one mapping $\newTokS\to\oldTokS$, and $(\play_0)_+$ to be the empty play in \gameplus. One easily checks that all the invariants are satisfied for $\play_0$; in particular, \invbound[\play_0] follows from the choice of \fS in (\ref{eq:fS}) and from the assumption that $\nS\geq\fS$.

\paragraph*{When \Envi makes a move.}

Let \play be a play ending with a \Sys move, and consider the next move \nextMoveE by \Envi resulting in play \playbis. We set $\map[\playbis]:=\map$ and $\playbisplus := \playplus \xrightarrow{\nextMoveE} \conf_+$. The latter is well defined according to \invetp. All the invariants are trivially satisfied with these definitions.

\paragraph*{When \Sys makes a move.}

Let us now consider a play \play ending with an \Envi move. By induction hypothesis, we can assume that all the $(\partplay[2i])_+$ and $\map[\partplay[2i]]$, for $i\leq l$, are defined and satisfy all the invariants. 
We will now explain how to construct the next \Sys move $\newstrat(\play)$ resulting in play $\playbis = \play \xrightarrow{\newstrat(\play)} \conf_{n+1}$, 
as well as $\playbisplus$ and $\map[\playbis]$ so that these invariants still hold for $\playbis$.
We distinguish two cases: when the ghost token moves down in the tree $(\Types,\preordtp)$ and when it stays in the same type.

\paragraph*{When the ghost goes down in the tree.}

Let \type be a maximal type such that $\type \neq \typeghost$, $\typeghost \preordtp \type$, and there is a \type-descending sequence exists in \play with associated \perm and \seqmoves, assuming such a type exists.
Let $\playbisplus = \playplus \xrightarrow{\seqmoves} \conf'_+$.
In this case, we define the move $\newstrat(\play)$ as follows:
\begin{equation}
  \label{eq:def_strat}
  \forall \ps \in\newTokS,\ \newstrat(\play)(\ps):=\playbisplus(\perm(\map(\ps)))\,.
\end{equation}
This is well defined, insomuch as for every $\ps\in\newTokS$ we have, according to \invstp and (\ref{eq:type}), 
\[\play(\ps) = \playplus(\map(\ps)) \preordtp \playbisplus(\perm(\map(\ps)))\,.\]
Let us write \nextplay for the play \play followed by our newly defined response:
\[\nextplay:=\play \xrightarrow{\newstrat(\play)} \conf'\,.\]
Let us now define $\nextplayplus$ and \nextmap as follows:
\begin{equation}
  \label{eq:def_nu}
  \nextplayplus:=\playbisplus
\end{equation}
and
\begin{equation}
  \label{eq:def_map}
  \forall\ps\in\newTokS,\ \nextmap(\ps):=\perm(\map(\ps))\,.
\end{equation}
Notice that these definitions entail
\begin{equation}
  \label{eq:def_ghost}
  \nextghost=\perm(\ghost)\quad\text{and}\quad\nexttypeghost=\type\,.
\end{equation}
Let us show that with these definitions, the invariants hold on \nextplay:
\begin{description}
\item[Proof of \nextinvnu:] Because \seqmoves satisfies (\ref{eq:type}), $\nextplayplus = \playplus \xrightarrow{\seqmoves} \conf'_+$ is compatible with \oldstrat. It ends with a \Sys move, as does \nextplay.
The second part of the invariant comes trivially as \nextplayplus is built by extending \playplus.

\item[Proof of \nextinvetp:] We need to show that for every $\pe\in\TokE$, $\nextplay(\pe) = \nextplayplus(\pe)$. Because \seqmoves satisifes (\ref{eq:type}), we have $\playplus(\pe) = \playbisplus(\pe)$, from which \nextinvetp follows, since for every $\pe\in\TokE$, $\playplus(\pe)=\nextplay(\pe)$ (as a \Sys move doesn't change the type of an \Envi token) and $\nextplayplus=\playbisplus$ (\ref{eq:def_nu}).

\item[Proof of \nextinvstp:] Let us show that for every $\ps\in\newTokS$, $\nextplay(\ps)=\nextplayplus(\nextmap(\ps))$.
By definition, $\nextplay(\ps)=\playbisplus(\perm(\map(\ps)))$ which, by definition of \nextplayplus (\ref{eq:def_nu}) and \nextmap (\ref{eq:def_map}), amounts to $\nextplayplus(\nextmap(\ps))$.

\item[Proof of \nextinvghosttp:]
This follows directly from $\typeghost\preordtp\type=\nexttypeghost$ and \invghosttp.

\item[Proof of \nextinvbound:]
Let us show that \nexttypeghost is huge in \nextplay.
By (\ref{eq:type}) and (\ref{eq:def_ghost}), we have $|\{\ps\in\oldTokS:\nextplayplus(\ps)=\nexttypeghost\}|\geq\bound[\height[\nexttypeghost]]+1$. It follows from \nextinvstp that $|\{\ps\in\newTokS:\nextplay(\ps)=\nexttypeghost\}|\geq\bound[\height[\nexttypeghost]]$, as every token of \oldTokS but one (the ghost) has a preimage by \nextmap.

\item[Proof of \nextinvmax:] The maximality in our choice of \type ensures \nextinvmax holds.
\end{description}

\paragraph*{When the ghost stays put.}
Assume now that there is no type \type such that $\type \neq \typeghost$, $\typeghost \preordtp \type$, and there is a $\type$-descending sequence in \play.
We then prove the following:

\begin{claim}
  There exists at least $k+1$ tokens $s\in\oldTokS$ such that $\oldstrat(\playplus)(s)=\typeghost$.
\end{claim}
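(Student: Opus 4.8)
The plan is to argue by contradiction. Suppose $\oldstrat(\playplus)$ moves strictly fewer than $k+1$ tokens of $\oldTokS$ into $\typeghost$; from this I will produce a $\type$-descending sequence for $\play$ for some type $\type\neq\typeghost$ with $\typeghost\preordtp\type$, contradicting the standing hypothesis of this case.

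The first step is to count the number $N$ of tokens of $\oldTokS$ sitting in $\typeghost$ right after $\playplus$. Since $\ghost$ itself lies in $\typeghost$, and since by $\invstp$ the bijection $\map$ identifies the tokens of $\newTokS$ lying in a given type after $\play$ with the tokens of $\oldTokS\setminus\{\ghost\}$ lying in that type after $\playplus$, invariant $\invbound$ already gives $N\geq k+2$, and in fact $N\geq\bound[\height[\typeghost]]+2$ once one knows that $\typeghost$ is \emph{huge} (not merely large) in $\play$. Establishing this hugeness is the crux of the proof: $\typeghost$ was huge at the moment the ghost first entered it (the first branch of $\invbound$), and it is still huge at $\play$ because every intervening \Sys move was max-descending ($\invmax$), so no single strict descendant of $\typeghost$ ever accumulated $\bound[\cdot]+1$ tokens; hence only boundedly many tokens can have leaked out of $\typeghost$, and the recursion $\bound[n]=|\Types|^{\nE+2}(\bound[n-1]+1)^2+k$ is calibrated precisely so that the number remaining in $\typeghost$ stays above $\bound[\height[\typeghost]]$.

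Next, set $\moveS:=\oldstrat(\playplus)$ and $\playbisplus:=\playplus\xrightarrow{\moveS}\conf'_+$. If $\typeghost$ is a leaf of $(\Types,\preordtp)$, then no token can leave it and all $N\geq k+1$ tokens remain in $\typeghost$, immediately contradicting the assumption. Otherwise, since tokens only move along $\preordtp$, each of the $N$ tokens originally in $\typeghost$ either stays there or jumps to a strict $\preordtp$-descendant; by assumption at most $k$ stay, so at least $N-k\geq\bound[\height[\typeghost]]-k+2$ of them land, after $\moveS$, in the set $D$ of strict descendants of $\typeghost$, and $|D|<|\Types|$. By pigeonhole there is $\type\in D$ that receives at least $(\bound[\height[\typeghost]]-k+2)/|\Types|$ of these tokens; using $\height[\type]\leq\height[\typeghost]-1$, the monotonicity of $\bound[\cdot]$, and the above recursion, a short computation gives $(\bound[\height[\typeghost]]-k+2)/|\Types|\geq\bound[\height[\type]]+1$, so after $\playbisplus$ the type $\type$ carries at least $\bound[\height[\type]]+1$ tokens, matching the last line of~(\ref{eq:type}).

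It remains to promote the one-move sequence $\seqmoves:=(\moveS)$ to a genuine $\type$-descending sequence, i.e. to exhibit a permutation $\perm$ of $\oldTokS$ with $\playplus(\ps)\preordtp\playbisplus(\perm(\ps))$ for every $\ps$ and $\playbisplus(\perm(\ghost))=\type$. Pick among the tokens counted above one token $t^{*}$ with $\playplus(t^{*})=\typeghost$ and $\playbisplus(t^{*})=\type$; if $t^{*}=\ghost$ take $\perm=\mathrm{id}$, otherwise let $\perm$ swap $\ghost$ and $t^{*}$. For $\ps\notin\{\ghost,t^{*}\}$ the condition reduces to $\playplus(\ps)\preordtp\playbisplus(\ps)$, which holds; for $\ps=\ghost$ it is $\typeghost\preordtp\type$, which holds; and for $\ps=t^{*}$ it is $\typeghost\preordtp\playbisplus(\ghost)$, which holds since $\ghost$ moved out of $\typeghost$ under $\moveS$. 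Finally $\playbisplus$ is $\oldstrat$-compatible and fixes the \Envi tokens because $\seqmoves$ is a single \Sys move. Hence $(\perm,\seqmoves)$ is a $\type$-descending sequence for $\play$ with $\type\neq\typeghost$ and $\typeghost\preordtp\type$, contradicting the hypothesis of this case and establishing the Claim. The step I expect to be the main obstacle is the one in the second paragraph: arguing that $\typeghost$ is huge (rather than merely large) at $\play$, since hugeness can erode as tokens drift into descendants and recovering it requires a careful use of $\invmax$ all the way back to the point where the ghost entered $\typeghost$.
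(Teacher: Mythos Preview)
The decisive gap is in your second paragraph, and you rightly flag it yourself. Your sketch --- ``every intervening \Sys move was max-descending, so no single strict descendant of $\typeghost$ ever accumulated $\bound[\cdot]+1$ tokens; hence only boundedly many tokens can have leaked out'' --- does not go through. Max-descending at step $i$ only says there is no $\type$-descending sequence \emph{starting from} $(\partplay[2i+1])_+$; but any multi-step candidate $\seqmoves$ must leave \Envi tokens where they started, and between consecutive \Sys steps \Envi may have moved. Thus one \Sys token can leak from $\typeghost$ into a fixed descendant $\type$ at every step (with \Envi genuinely moving in between), and $\type$ can accumulate far more than $\bound[\height[\type]]+1$ tokens without a valid descending sequence ever existing from any single earlier step. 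This is exactly why $F$ depends on $\nE$: one needs a pigeonhole on \Envi configurations to bound the \emph{number} of leak events, not just the per-step leak. Your sketch never mentions \Envi or $\nE$, so it cannot be right.

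In fact the hugeness claim itself is too strong. With the paper's choice of $F$ one only obtains $\bound[h]+1-\sum_i N_i\geq k+1$, not $\geq \bound[h]+1$; so $\typeghost$ need not remain huge, and $\invbound$ correctly only promises ``large'' once the ghost has been there for more than one step. The paper therefore never tries to re-establish hugeness at $\play$. Instead it bounds the cumulative leak $\sum_i N_i$ directly, via two separate pigeonhole arguments: one bounding each $N_i$ by exhibiting a single-step descending sequence at step $i$ when too many tokens leave at once (this is essentially your third and fourth paragraphs, applied at an earlier step), and a second one bounding the number $n$ of leak events by pigeonholing on the $|\Types|^{\nE}$ possible \Envi configurations and, when $n$ is large, assembling a long $\seqmoves$ across many steps that share the same \Envi configuration. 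These two bounds combine with the initial hugeness to give the claim.

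Your one-step construction of a $\type$-descending sequence is essentially correct and matches the paper's per-step bound. One minor slip: when verifying $\playplus(t^{*})\preordtp\playbisplus(\ghost)$ you justify it by ``$\ghost$ moved out of $\typeghost$'', but $\ghost$ might be among the $\leq k$ tokens that stayed; the inequality still holds because $\playplus(\ghost)=\typeghost\preordtp\playbisplus(\ghost)$ by validity of $\moveS$.
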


\begin{proof}
  Let us decompose the play \play as \[\play=\conf_0 \xrightarrow{\moveE^0} \conf_1 \xrightarrow{\moveS^1} \conf_2 \xrightarrow{\moveE^2}\cdots\xrightarrow{\moveS^{2l-1}} \conf_{2l}\xrightarrow{\moveE^{2l}} \conf_{2l+1}\]
  It will be convenient to extend the notation of partial plays and to write $(\partplay[2l+2])_+$ to denote the play $\playplus \xrightarrow{\oldstrat(\playplus)} \conf'_+$.
  
  According to \invghosttp, the place of the ghost can only go down during the play \playplus in \gameplus. 
Let us consider the sequence $i_1<\cdots <i_n$ of all indices $i\leq l$ such that
  \begin{itemize}
  \item $\typeghost[\partplay[2i+1]]=\typeghost$, meaning that the ghost at that point was already in the same place as it is after the play \play, and
  \item if we denote by $N_i$ the number of tokens $\ps\in\oldTokS$ such that \[(\partplay[2i+1])_+(\ps) = \typeghost \text{ and } (\partplay[2i+2])_+(\ps)\neq \typeghost\,,\]
    we have $N_i\geq 1$. In other words, we consider only partial plays such that in the next move, at least one \Sys token moved down from \typeghost.
  \end{itemize}

  Because of \invbound and \invstp, we know that after $(\partplay[2{i_1}+1])_+$, the number of \Sys tokens in \typeghost is at least $\bound[\height[\typeghost]]+1$. Thus
  \begin{equation}
    \label{eq:fuite}
    \sum_{i\in\{i_1,\dots,i_n\}}N_i+|\{s\in\oldTokS:\oldstrat(\playplus)(s)=\typeghost\}|\geq \bound[\height[\typeghost]]+1\,.
  \end{equation}
  We will show that $\sum_{i\in\{i_1,\dots,i_n\}}N_i$ cannot be too large, by proving that for every $i$,
  \begin{equation}
    \label{eq:small_N}
    N_i\leq|\Types|\cdot \big(\bound[\height[\typeghost]-1]+1\big)\,,
  \end{equation}
  and that $n$ also must remain rather small:
  \begin{equation}
    \label{eq:small_n}
    n\leq|\Types|^{\nE+1}\cdot(\bound[\height[\typeghost]-1]+1)\,.
  \end{equation}
  Combining the definition of $F$ in (\ref{eq:F}) with (\ref{eq:fuite}), (\ref{eq:small_N}) and (\ref{eq:small_n}) we get \[|\{s\in\oldTokS:\playplus(s)=\typeghost\}|\geq k+1\] as stated by the claim. Note that one can prove tighter bounds for (\ref{eq:small_N}) and (\ref{eq:small_n}), and thus pick a smaller \fS in (\ref{eq:fS}). We refrain from doing so for the sake of readability.

  Let us first show that (\ref{eq:small_N}) holds: if there exists some $i\in\{i_1,\dots,i_n\}$ such that \[N_i> |\Types|\cdot \big(\bound[\height[\typeghost]-1]+1\big)\] then by the pigeonhole principle, at least one type $\type\neq\typeghost$ with $\typeghost\preordtp\type$ is such that there exist at least $\bound[\height[\typeghost]-1]+1$ (which is at least $\bound+1$ because $F$ is non-decreasing) tokens $s\in\oldTokS$ such that 
  \[(\partplay[2i+1])_+(\ps) = \typeghost = \typeghost[\partplay[2i+1]] \text{ and } (\partplay[2i+2])_+(\ps)= \type\,.\] 
  Let us consider one such token $\oldps\in\oldTokS$. With $\perm:=(\oldps\,\ghost[2i+1])$ (the permutation swapping \oldps and \ghost[2i+1]) and $\seqmoves:=\oldstrat((\partplay[2i+1])_+)$, this \type contradicts \invmax[\partplay[2i+2]] -- if $i=l$, it contradicts the non-existence of a \type with a \type-descending sequence.

  To conclude the proof, let us prove that (\ref{eq:small_n}) holds. Assume toward a contradiction that \[n\geq |\Types|^{\nE+1}\cdot(\bound[\height[\typeghost]-1]+1)\,.\] By the pigeonhole principle, there must exist a subsequence $j_1<\cdots<j_{n'}$ of $i_1<\cdots<i_n$ of length $n'\geq|\Types|\cdot(\bound[\height[\typeghost]-1]+1)$ such that for $j,j'\in\{j_1,\dots,j_{n'}\}$, 
  \[\forall\pe\in\TokE,\ (\partplay[2j+1])_+(\pe)=(\partplay[2j'+1])_+(\pe)\,.\]
  Given that $N_j\geq 1$ for each such $j$, the pigeonhole principle guarantees the existence of some $\type\neq\typeghost=\typeghost[2j_1+2]$ such that $\typeghost[2j_1+1]\preordtp\type$ and 
  \[\exists^{\geq\bound[\height[\typeghost]+1]+1}\ps\in\oldTokS,\ (\partplay[2j_1+1])_+(\ps) = \typeghost \text{ and } (\partplay[2j_{n'}+2])_+(\ps)=\type\,.\] 
  Let us consider one such token $\oldps\in\oldTokS$. With $\perm:=(\oldps\,\ghost[2j_1+1])$ (the permutation swapping \oldps and \ghost[2j_1+1]) and \[\seqmoves:=\moveS^{2j_1+1}\cdot\moveE^{2j_1+2}\cdots\moveS^{2j_{n'}+2}\,,\] this \type-descending sequence contradicts \invmax[\partplay[2j_1+2]], which ends the proof of the claim.
  \ifcsname fullversion\endcsname
  \else
  \qed
  \fi
\end{proof}

Let us consider a token $\oldps\in\oldTokS\cap\text{Im}(\map)$ such that $\oldstrat(\playplus)(\oldps)=\typeghost$ and the token $\newps\in\newTokS$ such that $\map(\newps)=\oldps$.

We define $\newstrat(\play)$ as follows:
\begin{equation}
  \label{eq:def_strat_static}
  \begin{cases}
    \forall \ps\in\newTokS\setminus\{\newps\},\ \newstrat(\play)(\ps):=\oldstrat(\playplus)(\map(\ps))\\
    \newstrat(\play)(\newps):=\newstrat(\playplus)(\ghost)
  \end{cases}
\end{equation}
This is well defined, as for every $\ps\in\newTokS$ different from \newps we have 
\[\play(\ps)=\playplus(\map(\ps))\preordtp\oldstrat(\playplus)(\map(\ps))\] 
as well as
 $\play(\newps)=\playplus(\oldps)\preordtp\typeghost\preordtp\newstrat(\playplus)(\ghost)$.

Once again, we write \nextplay for the play \play followed by the response we just defined
\[\nextplay:=\play \xrightarrow{\newstrat(\play)} \conf'\,.\]
We then define
\begin{equation}
  \label{eq:def_nu_static}
  \nextplayplus:=\playplus \xrightarrow{\oldstrat(\playplus)} \conf'_+
\end{equation}
and
\begin{equation}
  \label{eq:def_map_static}
  \forall\ps\in\newTokS,\ \nextmap(\ps):=
  \begin{cases}
    \ghost&\text{if }\ps=\newps\\
    \map(\ps)&\text{otherwise}
  \end{cases}
\end{equation}
which immediately gives
\begin{equation}
  \label{eq:def_ghost_static}
  \nextghost=\oldps\quad\text{and}\quad\nexttypeghost=\typeghost\,.
\end{equation}

Invariant \nextinvbound is satisfied due to the previous claim, \nextinvmax due to the assumption that there is no descending sequence, and all other invariants are straightforward from the definition of \nextplay.

  \ifcsname fullversion\endcsname
  \subsection{\newstrat is a winning strategy for \Sys.}
  \else
  \subsubsection{\newstrat is a winning strategy for \Sys.}
  \fi

In order to conclude the proof of Lemma~\ref{lem:sys_cutoff}, it remains to prove that \newstrat is a winning strategy for \Sys in \game. 
To that end, let us consider a maximal play $\widehat{\play}$ compatible with \newstrat. 
Let us define the play $\widehat{\play}_+$ defined as the limit of $\playplus$ for increasing prefixes \play of $\widehat{\play}$, which is well defined due to \invnu.

There exists a finite prefix \play of $\widehat{\play}$ such that the configurations reached after \play and \playplus are the limit configurations of $\widehat{\play}$ and $\widehat{\play}_+$ respectively.
Since $\widehat{\play}_+$ is a winning play for \Sys, the configuration reached after \playplus must be winning for \Sys.
Let us argue that the last configuration of \play is winning, which entails that $\widehat{\play}$ is a winning play for \Sys.

For every $\type\in\Types$,
\begin{itemize}
\item the number of \Envi tokens in \type after \play and the number of \Envi tokens in \type after \playplus are equal, in view of \invetp, and
\item following \invstp and \invbound, the number of \Sys tokens in \type after \play and the number of \Sys tokens in \type after \playplus are either equal or both larger than $k$, $k$ being the depth of the game.
\end{itemize}
Since the configuration reached after \playplus is winning for \Sys, the configuration reached after \play is winning as well for \Sys.

All in all, we have shown that any winning strategy \oldstrat for \Sys in \gameplus ensures the existence of a winning strategy \newstrat for \Sys in \game. This concludes the proof of Lemma~\ref{lem:sys_cutoff}.
\ifcsname fullversion\endcsname
\else
\qed
\fi
\end{toappendix}

\subsection{Solving the synthesis problem}\label{sec:end_proof}
Remember that our goal is to decide whether there exists a pair $(\nS,\nE) \in \N^2$ that is winning for a given formula \formule of depth $k$.
Using Lemmas~\ref{lem:env_cutoff} and \ref{lem:sys_cutoff}, we have shown that we can restrict the search space to the set $N = \{(\nS,\nE) \in \N^2 \mid \nE \leq \fE \land \nS \leq \fS\}$ (which is finite and computable) in the sense that if there is no winning pair in $N$, then there will be no winning pair in $\N^2$.

Recall that for a fixed pair $(\nS,\nE)$, the corresponding token game is a finite game with finite configurations.
As such, it can be solved by seeing it as a game on the graph of configurations, with a Büchi winning condition where accepting configurations are exactly those that satisfy \Acc.
From there, a winning strategy can be mapped back to a winning strategy in the $(\nS,\nE)$ token game by looking at the history of a play and constructing the configurations from it.
Finally, as the proofs of equivalence in Section~\ref{sec:synthesis} are constructive, one can then go from a winning strategy in the token game back to a winning strategy in the original synthesis game.
Therefore, the synthesis problem can be solved by iterating on all pairs in $N$ and solving the token game for that pair, and then accepting the first winning pair found if there is one, and rejecting if no such pair is found. 

\begin{figure}
\tikzset{every state/.style={minimum size=15pt}}
\begin{center}
  \begin{tikzpicture}[scale=0.9]
    \begin{scope}
    \draw[->,semithick] (0,0) -- (3,0) node [below] {$\nE$};
    \draw[->,semithick] (0,0) -- (0,3) node [left] {$\nS$};
    \foreach \x in {0,...,5}
      \foreach \y in {0,...,5}
        \draw (0.5*\x,0.5*\y) node {$\cdot$};
    \end{scope}
    \draw[->] (3.2,1.5) -- (4.3,1.5) node [above left] {\tiny Lemma~\ref{lem:env_cutoff}};
    \begin{scope}[shift={(4.5,0)}]
    \draw[->,semithick] (0,0) -- (3,0) node [below] {$\nE$};
    \draw[->,semithick] (0,0) -- (0,3) node [left] {$\nS$};
    \foreach \x in {0,...,5}
      \foreach \y in {0,...,5}
        \draw (0.5*\x,0.5*\y) node {$\cdot$};
    \draw (2,0) node [below] {$\fE$};
    \draw[color=red] (2,0) -- (2,3);
    \end{scope}
    \draw[->] (7.7,1.5) -- (8.8,1.5) node [above left] {\tiny Lemma~\ref{lem:sys_cutoff}};
    \begin{scope}[shift={(9,0)}]
    \draw[->,semithick] (0,0) -- (3,0) node [below] {$\nE$};
    \draw[->,semithick] (0,0) -- (0,3) node [left] {$\nS$};
    \fill[pattern=north west lines, pattern color=red!30] (0,0) -- (0,.5) -- (.5,.5) -- (1,1) -- (1.5,2) -- (1.75,3) -- (2,3) -- (2,0) --(0,0);
    \foreach \x in {0,...,5}
      \foreach \y in {0,...,5}
        \draw (0.5*\x,0.5*\y) node {$\cdot$};
    \draw (2,0) node [below] {$\fE$};
    \draw[color=red] (2,0) -- (2,3);
    \draw (0,.5) node[left] {$\fS$};
    \draw[color=red] (0,0.5) -- (0.5,0.5) -- (1,1) -- (1.5,2) -- (1.75,3);
    \draw (1.25,0.77) node {\textcolor{red}{$N$}};
    \end{scope}

  \end{tikzpicture}
\caption{Bounding the search space for winning pairs.}\label{fig:solving}
\end{center}
\end{figure}
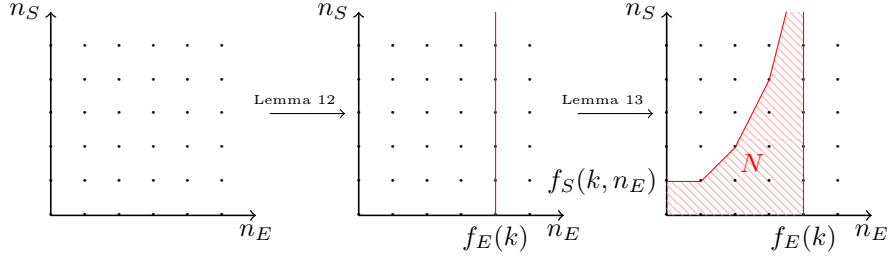

\begin{toappendix}
\section{Relation between \FO and \prefFO types}
In this section, we investigate the link between \FO and \prefFO types on regular words.
Note first that every \prefFO-sentence of quantifier depth $k$ is equivalent to some \FO-sentence of quantifier depth $k$. What this means is that \FO $k$-types are more fine-grained than \prefFO $k$-types -- in particular, it makes sense to speak of the \prefFO $k$-type of an \FO $k$-type.

Loosely speaking, two finite words $u$ and $v$ agree on \prefFO sentences if and only if there exists words $u'$ and $v'$ such that $u\cdot u'$ and $v$ agree on \FO sentences, as well as $u$ and $v\cdot v'$. This is stated more precisely in the two following lemmas.

\begin{lemma}
  Let $u, v$ be two finite words for which there exists $u', v'$ such that $u\cdot u'\foeq v$ and $u\foeq v\cdot v'$. Then $u\prefFOeq v$.
\end{lemma}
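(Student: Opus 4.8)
The plan is to argue at the level of Ehrenfeucht--Fra\"iss\'e games and to show that the Duplicator wins the $k$-round prefix \EF game between $u$ and $v$ (for the implicit quantifier depth $k$), which by the word version of Proposition~\ref{prop:EF} is precisely $u\prefFOeq v$. The starting observation is that on words the prefix game has a very rigid shape: in the first round the Spoiler cannot make a prefix move (there is no bounding pebble or constant yet) and cannot make a process move (words have none), so round~$1$ is a bounding move placing the unique bounding pebble on some position; every subsequent round is then forced to be a prefix move, i.e.\ a pebble on a position \emph{strictly} below the bounding pebble. Consequently, once the bounding pebbles sit at position $j$ in $u$ and $j'$ in $v$, the remaining $k-1$ rounds are nothing but the ordinary $(k-1)$-round \FO \EF game (with order and letter predicates) on the prefixes $u[1\dots j]$ and $v[1\dots j']$, with the win condition additionally constraining the pair $(j,j')$.

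First I would dispose of degenerate cases: if $u=\motvide$, then $u\foeq v\cdot v'$ forces $v\cdot v'$, hence $v$, to be empty (a depth-$1$ sentence separates the empty word from a nonempty one), and symmetrically if $v=\motvide$; so if either word is empty both are and the claim is immediate (and for $k=0$ there is nothing to prove, since plain words satisfy no nontrivial quantifier-free sentence). Assume henceforth $u,v\neq\motvide$. I would also note that the hypothesis is invariant under swapping $(u,v)$ with $(v,u)$ and $(u',v')$ with $(v',u')$, so it suffices to describe how the Duplicator answers a bounding move by the Spoiler, on either side.

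The heart of the argument: suppose the Spoiler puts the bounding pebble on position $j$ of $u$. Since $j\le|u|\le|u\cdot u'|$, this $j$ is also a position of $u\cdot u'$; feeding this move to the Duplicator's winning strategy in the $k$-round \FO game witnessing $u\cdot u'\foeq v$ yields a position $j'$ of $v$ (it must lie in $v$, since $v$ has no extra suffix) such that, after placing pebbles $j$ and $j'$, the Duplicator can win $k-1$ more rounds of the \FO game between $u\cdot u'$ and $v$ --- equivalently, $(u\cdot u',j)$ and $(v,j')$ agree on all \FO formulas of quantifier depth $k-1$ in one free variable. The Duplicator answers the bounding move by playing $j'$, and I claim this $(k-1)$-round winning \FO strategy restricts to a winning $(k-1)$-round \FO strategy between $(u[1\dots j],j)$ and $(v[1\dots j'],j')$, which finishes the game. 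Indeed, every later prefix move by the Spoiler lies in $[1,j-1]$ (resp.\ $[1,j'-1]$), hence is a legal move in the full word $u\cdot u'$ (resp.\ $v$); any response prescribed by the \FO strategy must preserve the order atom relative to the bounding pebble, so it lands strictly below $j'$ (resp.\ $j$), i.e.\ inside the corresponding prefix, and is therefore a legal prefix-move response; and since all pebbled positions are $\le j$ on one side and $\le j'$ on the other, the letter and order relations among them in the prefixes coincide with those in $u\cdot u'$ and $v$, so the partial isomorphism is maintained throughout. Symmetrically, a bounding move by the Spoiler on position $j'$ of $v$ is answered using the hypothesis $u\foeq v\cdot v'$ (reading $j'$ as a position of $v\cdot v'$) to obtain a matching $j$ in $u$, and the same restriction argument applies.

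Combining the two cases, the Duplicator wins the $k$-round prefix \EF game between $u$ and $v$, hence $u\prefFOeq v$. I expect the only genuinely subtle point to be the restriction step --- verifying that the \FO strategy on the full words never drags the Duplicator outside the relevant prefixes and that atomic types transfer to those prefixes; this is exactly where strict order-preservation with respect to the bounding pebble does the work, and it is the same mechanism already exploited in Lemma~\ref{lem:subgame}.
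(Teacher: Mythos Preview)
Your argument is correct: restricting the Duplicator's winning strategy in the $k$-round \FO game on $u\cdot u'$ versus $v$ (respectively $v\cdot v'$ versus $u$) to the positions below the bounding pebbles does yield a winning strategy in the $k$-round prefix game, precisely because order-preservation with respect to the bounding pebble keeps all responses inside the relevant prefixes.

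The paper takes a different, more algebraic route. It uses that $\foeq$ is a congruence for concatenation: from $u\cdot u'\foeq v$ and $v\cdot v'\foeq u$ one gets $u\cdot u'\cdot v'\foeq u$, hence $u\cdot u'\cdot v'\prefFOeq u$; the sandwich lemma (Lemma~\ref{lem:sandwich}) then gives $u\cdot u'\prefFOeq u$, which combined with $u\cdot u'\prefFOeq v$ yields the conclusion. This is shorter and reuses lemmas already established in the paper. Your approach, by contrast, is self-contained at the \EF level and avoids invoking congruence or the sandwich lemma; in fact it re-derives the mechanism behind Lemma~\ref{lem:subgame} (and its Corollary~\ref{cor:prefix_win}) directly from the \FO hypothesis, rather than going through \prefFO types first. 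Either way works; the paper's proof is a two-line reduction to existing infrastructure, while yours makes the game-theoretic content explicit.
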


\begin{proof}
  Since $\foeq$ is a congruence for the concatenation in the monoid of finite words, the assumptions combine to give $u\cdot u'\cdot v'\foeq u$, which implies $u\cdot u'\cdot v'\prefFOeq u$.

  Lemma~\ref{lem:sandwich} then ensures that $u\cdot u'\prefFOeq u$. This equivalence, combined with $u\cdot u'\prefFOeq v$ (which follows directly from the assumptions), concludes the proof.
  \ifcsname fullversion\endcsname
  \else
  \qed
  \fi
\end{proof}

Conversely, we have:

\begin{lemma}
  Let $u, v$ be two finite words such that $u\prefFOeq[k+1] v$. Then there exists $u', v'$ such that $u\cdot u'\foeq v$ and $u\foeq v\cdot v'$.
\end{lemma}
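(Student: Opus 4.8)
The plan is to exploit the word-version of the prefix EF game (the adaptation of Proposition~\ref{prop:EF} with no process pebble and a single bounding pebble) together with the fact, recalled above, that $\foeq$ is a congruence for concatenation in the monoid of finite words.

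First I would dispose of the degenerate case. If $u=\varepsilon$, then in the $(k+1)$-round prefix game (note $k+1\geq 1$) Spoiler's first move must be a bounding move; were $v\neq\varepsilon$, Spoiler would play it in $v$ and Duplicator could not answer inside the empty word $u$, contradicting $u\prefFOeq[k+1]v$. Hence $v=\varepsilon$ too, and $u'=v'=\varepsilon$ works trivially. So from now on assume $u,v\neq\varepsilon$.

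The key step is the following claim: \emph{if $u\prefFOeq[k+1]v$, then some prefix $v[0\dots q]$ of $v$ satisfies $u\foeq[k]v[0\dots q]$.} To prove it, consider the $(k+1)$-round prefix EF game between $u$ and $v$, and let Spoiler open with a bounding move on the last position $p=|u|-1$ of $u$; let $q$ be the position in $v$ on which Duplicator's winning strategy responds. Since Duplicator wins, this response is legal, so $v(q)=u(p)$, and Duplicator still wins the remaining $k$ rounds, all of which are prefix moves, i.e.\ moves on positions $<p$ in $u$ (so inside $u[0\dots p-1]$) and on positions $<q$ in $v$ (so inside $v[0\dots q-1]$). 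I would then argue that this implies Duplicator wins the $k$-round $\FO$ EF game between the words $u=u[0\dots p]$ and $v[0\dots q]$: simulate the $\FO$ game by the tail of the prefix game with the bounding pebbles already sitting on $p$ and $q$ — when Spoiler plays the maximal position of one side, answer with the maximal position of the other (legitimate since $u(p)=v(q)$ and both are the respective maxima), and when Spoiler plays a non-maximal position, this is a legal prefix move and Duplicator copies the winning prefix-game response (which is again strictly below the bounding pebble on the other side). All atomic constraints ($<$, equality, letters) among the resulting pebbles are preserved, because the prefix-game winning condition controls exactly these relations, including comparisons with the bounding pebbles. Hence $u\foeq[k]v[0\dots q]$; this is where the extra quantifier of the hypothesis is spent.

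With the claim in hand, applying it to $(u,v)$ gives a prefix $v[0\dots q]$ with $u\foeq[k]v[0\dots q]$, and since $\foeq[k]$ is a congruence, $u\cdot v[q+1\dots]\ \foeq[k]\ v[0\dots q]\cdot v[q+1\dots]=v$, so $u':=v[q+1\dots]$ satisfies $u\cdot u'\foeq[k]v$. Because the hypothesis $u\prefFOeq[k+1]v$ is symmetric, applying the claim to $(v,u)$ likewise yields a prefix $u[0\dots p']$ with $v\foeq[k]u[0\dots p']$, whence $v':=u[p'+1\dots]$ satisfies $u\foeq[k]v\cdot v'$, which completes the proof. The only delicate point is the reduction inside the claim from ``Duplicator wins the tail of the $(k+1)$-round prefix game'' to ``$u\foeq[k]v[0\dots q]$'': one must correctly identify the bounding pebbles with the canonical last positions and keep the round count straight; the remaining manipulations are routine.
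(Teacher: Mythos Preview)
Your proof is correct and follows essentially the same route as the paper: place the bounding pebble on the last position of $u$, let $q$ be Duplicator's response in $v$, argue that the remaining $k$ rounds yield $u\foeq v[0\dots q]$, and conclude via the congruence property by setting $u':=v[q+1\dots]$ (and symmetrically for $v'$). If anything, you are more careful than the paper in the simulation step, explicitly treating the case where Spoiler in the \FO game plays on the maximal position (which is not a legal prefix move), whereas the paper glosses over this point.
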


Before turning to the proof of this lemma, let us note that the requirement that $u$ and $v$ be \prefFO similar at depth $k+1$ (and not just $k$) is necessary. Consider indeed the words \[u:=abaac\] and \[v:=abaaac\] in the case $k:=3$.

Let us consider the following strategy for the Duplicator in the $3$-round prefix \EF game between $u$ and $v$: if the Spoiler places their bounding pebble on any of the first four positions of $u$ or $v$, then the Duplicator responds by playing on the same position in the other word, and wins since both prefixes are then isomorphic. Otherwise, if the Spoiler starts by playing on the last $a$ of $v$, then the Duplicator plays on the last $a$ of $u$, and can easily win two more rounds from there. Finally, if the Spoiler makes their first move on a $c$, then the Duplicator responds on the other $c$, and can once again win two more rounds from that configuration. We thus have $u\prefFOeq[3]v$.

However, for any word $u'$, $u\cdot u'$ and $v$ will disagree on the \FO-sentence of quantifier depth $3$ stating the existence of three consecutives $a$'s before the first occurrence of $c$.

\begin{proof}
  Let $u$ and $v$ be finite words such that $u\prefFOeq[k+1] v$. We can assume that both of them have at least one letter, for otherwise we already have $u\foeq v$.

  We show the existence of some word $u'$ such that $u\cdot u'\foeq v$, and we conclude by symmetry. By assumption, the Duplicator has a winning strategy in the $(k+1)$-round prefix \EF game between $u$ and $v$. Let $n$ be the position in $v$ of the Duplicator's response when the Spoiler first plays on the last letter of $u$; from that configuration, the Duplicator can win for $k$ for rounds. We claim that $u\foeq v[0\dots n]$: indeed, in the $k$-round (regular) \EF game between $u$ and $v[0\dots n]$, the Duplicator can follow the above-mentioned strategy (in which every play in $v$ is left of the bounding pebble, i.e. in $v[0\dots n]$) and win.

  Let us now define $u':=v[n+1\dots]$: by virtue of $\foeq$ being a congruence for the concatenation of finite words, $u\foeq v[0\dots n]$ entails $u\cdot u'\foeq v$, which concludes the proof.
  \ifcsname fullversion\endcsname
  \else
  \qed
  \fi
\end{proof}
\end{toappendix}

\section{Conclusion}\label{sec:conclu}
We have shown that the synthesis problem is decidable for \prefFOdw, the prefix first-order logic on data words.
Our proof is based on successively bounding the number of Environment and System processes that are relevant for solving the problem, thus restricting the search space to a finite set.

The correctness of those bounds heavily relies on the nice properties exhibited by \prefFOdw.
The most important of them is that \prefFO-types on regular words form a tree, which would not be the case with the unrestricted $\FO[\lesssim]$.
We show in the
\ifcsname fullversion\endcsname
appendix
\else
full version of this paper
\fi
that \prefFOdw is actually the finest restriction enjoying this property, in the sense that \prefFO-types correspond exactly to connected components in the graph of types of $\FO[\lesssim]$.
To the best of our knowledge, this natural restriction and the properties it enjoys have not been studied anywhere else.

On the synthesis side, our result narrows the gap between decidability and undecidability of first-order logic fragments on data words.
Previous results had shown that adding unrestricted order to the allowed predicates immediately lead to undecidability, greatly reducing the kind of specifications that could be written.
It turns out that the limited kind of order added by the predicate $\lesssim$ does not share the same outcome, so we obtain a fragment with more expressivity than before and for which synthesis is still decidable.
The decidability for the full $\FO[\lesssim]$ remains open; our technique for tracking a missing ghost token cannot be easily adapted in that setting as the type structure for that logic is not as nice as in the \prefFOdw setting.
We conjecture that it remains decidable, and leave this as potential future works.

\bibliographystyle{splncs04}
\bibliography{biblio}

\begin{thebibliography}{10}
\providecommand{\url}[1]{\texttt{#1}}
\providecommand{\urlprefix}{URL }
\providecommand{\doi}[1]{https://doi.org/#1}

\bibitem{DBLP:conf/fossacs/BerardBLS20}
B{\'{e}}rard, B., Bollig, B., Lehaut, M., Sznajder, N.: Parameterized synthesis
  for fragments of first-order logic over data words. In: Foundations of
  Software Science and Computation Structures {FOSSACS}. Springer (2020)

\bibitem{DBLP:conf/lics/BojanczykMSSD06}
Bojanczyk, M., Muscholl, A., Schwentick, T., Segoufin, L., David, C.:
  Two-variable logic on words with data. In: 21th {IEEE} Symposium on Logic in
  Computer Science {LICS} (2006)

\bibitem{TAMS138-BL}
B{\"u}chi, J.R., Landweber, L.H.: Solving sequential conditions by finite-state
  strategies. Transactions of the American Mathematical Society  (1969)

\bibitem{sisl1957-Chu}
Church, A.: Applications of recursive arithmetic to the problem of circuit
  synthesis. In: {S}ummaries of the {S}ummer {I}nstitute of {S}ymbolic
  {L}ogic~-- Volume~1 (1957)

\bibitem{demri2012temporal}
Demri, S., d’Souza, D., Gascon, R.: Temporal logics of repeating values.
  Journal of Logic and Computation  \textbf{22}(5),  1059--1096 (2012)

\bibitem{demri2009ltl}
Demri, S., Lazi{\'c}, R.: Ltl with the freeze quantifier and register automata.
  ACM Transactions on Computational Logic (TOCL)  \textbf{10}(3),  1--30 (2009)

\bibitem{exibard2022generic}
Exibard, L., Filiot, E., Khalimov, A.: A generic solution to register-bounded
  synthesis with an application to discrete orders. arXiv preprint
  arXiv:2205.01952  (2022)

\bibitem{FigueiraP18}
Figueira, D., Praveen, M.: Playing with repetitions in data words using energy
  games. In: Proceedings of the 33rd Annual {ACM/IEEE} Symposium on Logic in
  Computer Science, {LICS} (2018)

\bibitem{grange2023first}
Grange, J., Lehaut, M.: First order synthesis for data words revisited. arXiv
  preprint arXiv:2307.04499  (2023)

\bibitem{kaminski1994finite}
Kaminski, M., Francez, N.: Finite-memory automata. Theoretical Computer Science
   \textbf{134}(2),  329--363 (1994)

\bibitem{kara2016logics}
Kara, A.: Logics on data words  (2016)

\bibitem{neven2004finite}
Neven, F., Schwentick, T., Vianu, V.: Finite state machines for strings over
  infinite alphabets. ACM Transactions on Computational Logic (TOCL)
  \textbf{5}(3),  403--435 (2004)

\bibitem{pneuli1990distributed}
Pnueli, A., Rosner, R.: Distributed reactive systems are hard to synthesize.
  In: Proceedings [1990] 31st Annual Symposium on Foundations of Computer
  Science. pp. 746--757. IEEE (1990)

\end{thebibliography}

\end{document}